\theoremstyle{plain}
\newtheorem{thm}{Theorem}
\newtheorem{lem}{Lemma}
\newtheorem{coro}{Corollary}
\theoremstyle{definition}
\newtheorem{exmp}{Example}
\newtheorem{asmp}{Assumption}
\theoremstyle{remark}
\newtheorem{remark}{Remark}
\algnewcommand\algorithmicinput{\textbf{Input:}}
\algnewcommand\algorithmicoutput{\textbf{Output:}}
\algnewcommand\Input{\item[\algorithmicinput]}%
\algnewcommand\Output{\item[\algorithmicoutput]}%
\newcommand{\N}{\mbox{$\textrm{\textup{N}}$}}
\newcommand{\RN}[1]{%
  \textup{\uppercase\expandafter{\romannumeral#1}}%
}
\newcommand{\bm}{\boldsymbol}
\newcommand{\vertiii}[1]{{\left\vert\kern-0.25ex\left\vert\kern-0.25ex\left\vert #1     \right\vert\kern-0.25ex\right\vert\kern-0.25ex\right\vert}}
\def\E{\mathbb{E}}
\def\var{\mathrm{Var}}
\def\cov{\mathrm{Cov}}
\begin{document}


\title{\large \bf Prediction De-Correlated Inference: A safe approach for post-prediction inference}

\author{}
\date{}
\maketitle
\vspace{-2cm}
\begin{center}
    Feng Gan\footnote{School of Statistics and Data Sciences, Nankai University, Tianjin, 300071, P.R.China\label{add1}}, Wanfeng Liang\footnote{School of Data Science and Artificial Intelligence, Dongbei University of Finance and Economics, Dalian, 116025, Liaoning, P.R.China\\  E-mail: liangwf@dufe.edu.cn}$^*$, Changliang Zou\textsuperscript{\ref{add1}}
\end{center}




\begin{abstract}

In modern data analysis, it is common to use machine learning methods to predict outcomes on unlabeled datasets and then use these pseudo-outcomes in subsequent statistical inference. Inference in this setting is often called post-prediction inference. We propose a novel assumption-lean framework for statistical inference under post-prediction setting, called Prediction De-Correlated Inference (PDC). Our approach is safe, in the sense that PDC can automatically adapt to any black-box machine-learning model and consistently outperform the supervised counterparts. The PDC framework also offers easy extensibility for accommodating multiple predictive models. Both numerical results and real-world data analysis demonstrate the superiority of PDC over the state-of-the-art methods.
\end{abstract}


\section{Introduction}

The effectiveness of many modern techniques is heavily reliant on the availability of large amounts of labeled data, which can often be a bottleneck due to the costs associated with data annotation and the scarcity of labeled data in certain domains \citep{Chapelle2009}. Examples of this situation are ubiquitous, including electronic health records \citep{cai2022semitri}, expression quantitative trait loci studies \citep{michaelson2009detection}, public health \citep{khoury1999mortality}, among others.  The development of machine learning (ML) and deep learning techniques has led to the popular practice of using black-box models to predict missing outcomes and incorporating them into subsequent statistical analyses. While this approach offers certain advantages, it also introduces potential biases due to the reliance on the quality of the predicted outcomes.

In this article, we study a general semi-supervised M-estimation problem in a post-prediction setting. Suppose $(\text{Y},\mathbf{X}^{\top})^\top\in\mathcal{Y}\times\mathcal{X}$ is a $(p+1)$-dimensional random vector distributed according to $\Pr_{\text{Y},\mathbf{X}}$. We obtain a labeled dataset $\mathcal{D}_{L}=\{(y_i,\bm{x}_i^{\top})\in\mathbb{R}\times\mathbb{R}^{p},i=1,\dots,n\}$, where each sample is an i.i.d. realization from population $\Pr_{\text{Y},\mathbf{X}}$,  and an i.i.d. unlabeled dataset $\mathcal{D}_{U}=\{\bm{x}_i\in\mathbb{R}^{p},i=n+1,\dots,n+N\}$ from $\Pr_{\mathbf{X}}$, the marginal distribution of $\mathbf{X}$. Besides the presence of additional unlabeled data $\mathcal{D}_U$,  we also assume that a predictive model $\mu(\cdot)$ independent of $\mathcal{D}_L$ and $\mathcal{D}_U$ is available. Formally, we aim to  perform inference on the parameter $\theta^*\in\Theta$, defined as the solution to the estimating equation
\begin{equation}\label{equ:score}
	\E\{\bm{s}(\text{Y},\mathbf{X},\theta)\}=0,
\end{equation}
where $\bm{s}(\cdot,\cdot,\cdot):\mathcal{Y}\times\mathcal{X}\times\Theta\rightarrow \mathbb{R}^d$ is a pre-specified estimating function, $\E(\cdot)$ denotes the expectation with respect to all the stochastic terms, $\Theta\subset\mathbb{R}^d$ and $d$ is some positive integer. If $\bm{s}(y,\bm{x},\theta)=\nabla_{\theta} l(y,\bm{x},\theta)$ for some convex function $l(\cdot,\cdot,\cdot):\mathcal{Y}\times\mathcal{X}\times\Theta\rightarrow \mathbb{R}$, then $\theta^*$ can also be defined as $\theta^*\in\underset{\theta\in\Theta}{\arg\min}\ \E \{l(\text{Y},\mathbf{X},\theta)\}$. In this article, we do not assume that this property holds, that is to say, $\bm{s}(\cdot,\cdot,\theta)$ may not be written in the form of the gradient of a certain loss function.

Our goal is to propose an estimator that is asymptotically more efficient than the supervised counterpart $\hat{\theta}_{sup}$ which solves the estimating equation $n^{-1}\sum_{i=1}^n \bm{s}(y_i,\bm{x}_i,\theta)=0$ without using any information from $\mathcal{D}_U$ and $\mu(\cdot)$. Under some regular conditions, classic statistical theory \citep{vaart_1998} implies that the asymptotic covariance of  $\hat{\theta}_{sup}$ essentially depends on the properties of the estimating function at point $\theta^*$, 
\begin{gather}\label{equ:sup asymp dist}
	\sqrt{n}(\hat{\theta}_{sup}-\theta^*)\overset{p}{\rightarrow} \N_d(\bm{0},\bm{H}^{-1}{\cov}\{\bm{s}(\text{Y},\mathbf{X},\theta^*)\}\{\bm{H}^{\top}\}^{-1}), 
\end{gather}
where $\bm{H}=\nabla_{\theta^{\top}}\E\{\bm{s}(\text{Y},\mathbf{X},\theta^*)\}$. Thus, to obtain a more statistically efficient estimator, it is reasonable to adjust the empirical estimating function by incorporating additional information from $\mathcal{D}_U$ and $\mu(\cdot)$.  A straightforward approach is to impute the missing labels in $D_U$ via $\mu(\cdot)$  and then treat $\mathcal{D}_L\cup\{(\mu(\bm{x}_i),\bm{x}_i),i=n+1,\dots,n+N\}$ as the true dataset. Then a supervised approach can be applied to this pseudo dataset to obtain the desired estimator. The efficiency of this strategy clearly depends on the performance of  $\mu(\cdot)$. An inappropriate $\mu(\cdot)$ can result in an estimator that is less efficient than $\hat{\theta}_{sup}$.  \cite{wang2020methods} suggest applying a parametric model to capture the relationship between true labels and predictive model $\mu(\cdot)$ and then using this relationship to correct the predictions on the unlabeled data. It is required that the relationship between  can be characterized by a simple parametric model, which is often not met in practice. Recently, \cite{angelopoulos2023prediction} proposed a general framework called Prediction-Powered Inference (PPI). The PPI rectifies the predicted estimating function and can provide valid confidence intervals even using black-box ML models. It works with the following bias-corrected estimating function:
\begin{gather*}\label{equ:deb}
	\hat{S}_{ppi}(\theta)
	=\Big\{\underbrace{\frac{1}{n}\sum_{i=1}^n \bm{s}(y_i,\bm{x}_i,\theta)-\frac{1}{n}\sum_{i=1}^n\bm{s}(\mu(\bm{x}_i),\bm{x}_i,\theta)}_{debiasing\ term}+\frac{1}{N}\sum_{i=n+1}^{n+N}\bm{s}(\mu(\bm{x}_i),\bm{x}_i,\theta)\Big\}.
\end{gather*}
This idea has also been studied in the literature of semi-supervised learning which mainly focused on some special cases of M-estimation problems such as mean estimation \citep{Zhang2019ssmean,zhang2022high}, linear regression \citep{chakrabortty2018,azriel2022semi}, and quantile estimation \citep{chakrabortty2022semi}. Although the estimator  $\hat{\theta}_{ppi}\in\{\theta: \hat{S}_{ppi}(\theta)=0\}$ is asymptotically consistent, it is not {\it always} asymptotically more efficient than the supervised estimator $\hat{\theta}_{sup}$. To be more specific, let $\bm{s}_{\theta}=\bm{s}(\text{Y},\mathbf{X},\theta)$ and $\bm{s}_{\theta}^{\mu}=\bm{s}(\mu(\mathbf{X}),\mathbf{X},\theta)$, it is required that the matrix
\[
\cov(\bm{s}_{\theta^*})-\cov(\bm{s}_{\theta^*}-\bm{s}^{\mu}_{\theta^*})-\frac{n}{N}\cov(\bm{s}^{\mu}_{\theta^*})
\]
is positive semi-definite, which is not always satisfied if the $\mu(\cdot)$ is not ‘good’ enough or $N$ is not large enough.  Both \cite{schmutz2023dont} and \cite{angelopoulos2023ppi++} developed a lightweight modification of PPI to improve efficiency. However, their methods only guarantee a reduction of the trace of the asymptotic covariance matrix, and thus they are still not always superior to $\hat{\theta}_{sup}$.

More recently, \cite{song2023general} studied the general M-estimation problem and proposed a new estimator via projection technique. The projection basis is some polynomial functions of $\mathbf{X}$.
Their method produces asymptotically more efficient estimators for the target parameters than the supervised counterpart. However, one often can have access to some additional information in practice, e.g., surrogate variables in electronic health records \citep{hou2023surrogate} and genetic variants from first-degree relatives in case-control genetic association studies \citep{liu2017case}. Thus, it is of interest to develop methods that are able to utilize such information which is summarized as the pre-trained model $\mu(\cdot)$.



\subsection{Our contributions}
In this article, we propose a new framework called \emph{Prediction De-Correlated inference}  (PDC) to incorporate the unlabeled data and predictive ML models for the M-estimation problem. Our contributions lie in three folds. 
\begin{itemize}
	\item The estimator based on the PDC framework is `safe', in the sense that even when the black-box ML model makes a poor prediction, the PDC estimator is always asymptotically no worse than its supervised counterpart. 
	\item Our work has unique advantages compared with several existing  works. If  $\theta^*$ is multi-dimensional, one may want to perform inference on $\bm{c}^{\top}\theta^*$ for some vector $\bm{c}\in\mathbb{R}^d$. The proposed PDC  also provides safe estimates under this scenario. Existing  works \citep{schmutz2023dont,angelopoulos2023ppi++,miao2023assumption} are all constrained to  element-wise variance reduction or  trace reduction. In contrast, our method takes into account the correlation between the components of the estimator, yielding a more reliable procedure. For detailed numerical comparisons, see Section \ref{sec:experiments}.
	\item  The proposed PDC framework can be easily extended to  accommodate multiple predictive models; see Subsection \ref{subsec:PDC} for more detailed descriptions.
\end{itemize}

\subsection{Organization and notation}

The rest of this article is organized as follows. In Section \ref{sec:main method}, we describe the proposed Prediction De-Correlated framework. In Section \ref{sec:theorems}, we establish the asymptotic properties for the PDC estimator. Numerical studies and the real data analysis are summarized in Section \ref{sec:experiments}. Concluding remarks and discussions for future work are given in Section \ref{sec:discussion}.  All 
technical proofs can be found in the Appendix.

For any vector $\bm{c}$, $\Vert\bm{c}\Vert$ denotes its $L_2$ norm. For any matrix $\bm{C}$, $\Vert\bm{C}\Vert$ denotes its spectral norm. The identity matrix is denoted by $\bm{I}$. We use $\E$ and $\cov$ to denote the expectation and covariance with respect to all the stochastic terms,  respectively.
We add subscript $n$ to $\E(\cdot)$ and $\cov(\cdot)$ to denote the sample version of the population mean and covariance respectively, e.g., for any measurable functions $g(\cdot)$ and $h(\cdot)$ defined in $\mathcal{X}$, $\E_n (g)=\frac{1}{n}\sum_{i=1}^n g(\bm{x}_i)$, $\cov_n (g, h)=\frac{1}{n}\sum_{i=1}^n \{g(\bm{x}_i)-\E_n (g)\}\{h(\bm{x}_i)-\E_n (h)\}^{\top}$. Similarly, we add subscript $N$ to $\E(\cdot)$ to denote the sample version of the population mean based on the unlabeled data, e.g., $\E_N (g)=\frac{1}{N}\sum_{i=n+1}^{n+N}g(\bm{x}_i)$. A vector with subscript $(i)$ means the $i$-th element of that vector. A matrix with subscript $(ij)$  means the $(i,j)$-th entry of that matrix.

\section{Methodology}\label{sec:main method}

\subsection{Prediction de-correlated estimator}\label{subsec:PDC}

Our starting point is to utilize unlabeled data $\mathcal{D}_U$ and predictive model $\mu(\cdot)$ to modify the empirical estimating function, with the aim of reducing the covariance (in the matrix sense) at $\theta^*$. Define the predictive estimating function $\bm{f}(\cdot,\cdot):\mathcal{X}\times\Theta\rightarrow\mathbb{R}^{q}$, where $q$ is some positive integer which can be different from $d$. We denote $\bm{s}_1=\E_n\{\bm{s}(\text{Y},\mathbf{X},\theta)\},\bm{s}_2=\E_n\{\bm{f}(\mathbf{X},\theta)\},\bm{s}_3=\E_N\{\bm{f}(\mathbf{X},\theta)\}$. The $\bm{s}_1$ is the empirical estimating function that we use in the supervised inference procedure,  which only involves the information from $\mathcal{D}_L$. The $\bm{s}_2$ and $\bm{s}_3$ are the empirical predictive estimating functions,  which utilize the information from the predictive model on $\mathcal{D}_L$ and $\mathcal{D}_U$, respectively. Rather than using $\bm{s}_1$ and $\bm{s}_2$ to debias $\bm{s}_3$, we search for  a more efficient way to make use of the additional information contained in $\mathcal{D}_{U}$ and $\mu(\cdot)$. We consider solving the following modified estimating equation:
\begin{gather}\label{equ:modi-score-abc}
	\bm{s}_1+\bm{A}\bm{s}_2+\bm{B}\bm{s}_3=\bm{0},
\end{gather}
where $\bm{A},\bm{B}$ are $d\times q$ matrices needed to be specified. Taking the unbiasedness into account, we rewrite it as 
\begin{gather*}\label{equ:pdc modi-score}
	\bm{s}_1+\bm{T}(\bm{s}_2-\bm{s}_3)=\bm{0},
\end{gather*}
where $\bm{T}$ is a $d\times q$ matrix needed to be specified. 

Our goal has shifted to finding an appropriate $\bm{T}$ such that the resulting estimator is more efficient than the supervised counterpart. 
We propose an intuitive choice for $\bm{T}$. Let $\bm{T}=\gamma\widehat{\bm{T}}(\theta)$, where  
\begin{equation}\label{equ:hat T}
	\widehat{\bm{T}}(\theta)=\cov_n(\bm{s}_{\theta},\bm{f}_{\theta})\{\cov_n(\bm{f}_{\theta})\}^{-1}
\end{equation}
is the sample version of $\bm{T}^*(\theta)=\cov(\bm{s}_{\theta},\bm{f}_{\theta})\{\cov(\bm{f}_{\theta})\}^{-1}$, $\bm{s}_{\theta}= \bm{s}(\text{Y},\mathbf{X},\theta)$, $\bm{f}_{\theta}=\bm{f}(\mathbf{X},\theta)$ and $\gamma$ is a constant needed to be specified. 
The motivation behind the choice of $\bm{T}$ is as follows. First, $\bm{T}^*(\theta^*)$ is the multivariate least square coefficient matrix, thus $\bm{T}^*(\theta^*)\{\bm{f}_{\theta^*}-\E(\bm{f}_{\theta^*})\}$ is the $L_2$ projection of $\bm{s}_{\theta^*}$ into the linear space spanned by the components of $\bm{f}_{\theta^*}-\E(\bm{f}_{\theta^*})$. Hence $S(\text{Y},\mathbf{X},\theta^*)=\bm{s}_{\theta^*}-\bm{T}^*(\theta^*)\{\bm{f}_{\theta^*}-\E(\bm{f}_{\theta^*})\}$ is the decorrelated estimating function which is orthogonal  to $\bm{f}_{\theta^*}-\E(\bm{f}_{\theta^*})$.  The decorrelated estimating function certainly has a smaller covariance (in the matrix sense) than the original one. The larger the space spanned by $\bm{f}_{\theta^*}-\E(\bm{f}_{\theta^*})$, the smaller the covariance of the decorrelated estimating function we will obtain. 

The choice of the predictive estimating function  $\bm{f}(\cdot,\cdot)$ can be diverse. In practice, we can set $\bm{f}(\bm{x},\theta)$ as $\bm{s}(\mu(\bm{x}),\bm{x},\theta)$. Notice that the dimension of $\bm{f}(\cdot,\cdot)$ can be different from that of $\bm{s}(\cdot,\cdot,\cdot)$. We can augment a predictive estimating function with another predictive estimating function, for example,
\begin{gather}\label{equ:combine}
	\bm{f}(\bm{x},\theta)=(\bm{f}_1^\top(\bm{x},\theta),\bm{f}_2^\top(\bm{x},\theta))^\top=(\bm{s}^{\top}(\mu_1(\bm{x}),\bm{x},\theta),\bm{s}^{\top}(\mu_2(\bm{x}),\bm{x},\theta))^{\top},
\end{gather}
where $\mu_1(\cdot)$ and $\mu_2(\cdot)$ are two different predictive models, both of which are independent of $\mathcal{D}_L$, $\mathcal{D}_U$. Because $\mu(\cdot)$ is independent of $\mathcal{D}_L$ and $\mathcal{D}_U$, the elements in the set $F(\theta)=\{\bm{f}(\bm{x}_i,\theta): i=1,\dots,n,\dots,n+N\}$ are exchangeable. Thus, in principle, we can use any subset of $F$ to estimate $\E\{\bm{f}(\mathbf{X},\theta)\}$ and reserve the rest for projection. In order to increase the correlation between $\bm{s}_1$ and the projection part, we use the unlabeled data to estimate $\E\{\bm{f}(\mathbf{X},\theta)\}$. Finally, $\gamma$ is  used to adjust for different sample sizes  used in $\bm{s}_1$ and the projection step. Thus, we can estimate $\E \{S(\text{Y},\mathbf{X},\theta)\}$ by
\begin{equation}\label{equ:hat S}
	\hat{S}(\theta)=\frac{1}{n}\sum_{i=1}^n\Big[\bm{s}(y_i,\bm{x}_i,\theta)+\gamma\widehat{\bm{T}}(\theta)\big\{\bm{f}(\bm{x}_i,\theta)-\frac{1}{N}\sum_{i=n+1}^{n+N}\bm{f}(\bm{x}_i,\theta)\big\}\Big].
\end{equation}

Solving  $\hat{S}(\theta)=0$ directly could result in multiple solutions. So we use a one-step estimator instead.   We formalize our one-step Prediction De-Correlated  procedure in the following algorithm.
\begin{algorithm}
	\caption{Calculate the one-step Prediction De-Correlated  estimator $\hat{\theta}_{pdc,1}$}

	\begin{algorithmic}
		\Input Labeled data $\mathcal{D}_L$, unlabeled data $\mathcal{D}_U$, initial estimator $\hat{\theta}_0$, predictive estimating function $\bm{f}(\cdot,\cdot)$.
		\State{Calculate sample version of the multivariate least square coefficient matrix  $\widehat{\bm{T}}(\hat{\theta}_0)$ in \eqref{equ:hat T}.}
		\State{Calculate the estimated  prediction de-correlated  estimating function  $\hat{S}(\hat{\theta}_0)$ in \eqref{equ:hat S}.}
		\State{Calculate the one-step prediction de-correlated estimator $\hat{\theta}_{pdc,1}$,
			\begin{gather*}
				\hat{\theta}_{pdc,1}=\hat{\theta}_0-\widehat{\bm{H}}^{-1}\hat{S}(\hat{\theta}_0).
			\end{gather*}
			where  $\widehat{\bm{H}}$ is an estimator of $\bm{H}=\nabla_{\theta^{\top}}\E\{\bm{s}(\text{Y},\mathbf{X},\theta^*)\}$. 
		}
		\Output One-step Prediction De-Correlated  estimator $\hat{\theta}_{pdc,1}$.
	\end{algorithmic}\label{alg:one-step}
	
	
\end{algorithm}

 If an additional dataset for training $\mu(\cdot)$ is obtainable and an initial estimator of $\theta^*$ is available, an alternative procedure is to directly utilize this dataset to train a model on $(\bm{s}(\text{Y},\mathbf{X},\theta^*),\mathbf{X})$. Both the Algorithm \ref{alg:one-step} and such variant can be applied iteratively. We conclude this subsection with two examples to further clarify our methodology.

\begin{exmp}[Mean estimation]\label{exmp:mean estimation}
	If $s(y,\bm{x},\theta)=\theta-y$, then we are actually estimating $\E (\text{Y})$. In this case, $\bm{H}$ is fixed at $1$, therefore we do not need to estimate it. If we choose $\hat{\theta}_0=\hat{\theta}_{sup}=\E_n (\text{Y})$ and set $\bm{f}(\bm{x},\theta)=\bm{x}$, then 
 \[
 \hat{\theta}_{pdc,1}=\E_n(\text{Y})-\gamma\cov_n(\text{Y},\mathbf{X})\{\cov_n(\mathbf{X})\}^{-1}\{\E_n (\mathbf{X})-\E_N (\mathbf{X})\}.
 \] 
 Now $\hat{\theta}_{pdc,1}$  is just the mean estimator proposed by \cite{Zhang2019ssmean}. With an additional pre-trained model $\mu(\cdot)$, we can set $\bm{f}(\bm{x},\theta)=\theta-\mu(\bm{x})$, and then $\hat{\theta}_{pdc,1}$ becomes
 \[
 \hat{\theta}_{pdc,1}=\E_n(\text{Y})-\gamma\cov_n\{\text{Y},\mu(\mathbf{X})\}[\cov_n\{\mu(\mathbf{X})\}]^{-1}[\E_n\{\mu(\mathbf{X})\}-\E_N\{\mu(\mathbf{X})\}].
 \]
\end{exmp}

\begin{exmp}[Generalized Linear Models]\label{exmp:glm}
	In the generalized linear models, the loss function is $l(y,\bm{x},\theta)=-y\bm{x}^{\top}\theta+b(\bm{x}^{\top}\theta)$. This corresponds to the estimating function $\bm{s}(y,\bm{x},\theta)=-y\bm{x}+b'(\bm{x}^{\top}\theta)\bm{x}$. We can estimate $\bm{H}$ by $\widehat{\bm{H}}=\E_n \{b''(\mathbf{X}^{\top}\hat{\theta}_{0})\mathbf{X}\mathbf{X}^{\top}\}$. In linear model, $\bm{s}(y,\bm{x},\theta)=(\bm{x}^{\top}\theta-y)\bm{x}$, $\widehat{\bm{H}}=\E_n(\mathbf{X}\mathbf{X}^{\top})$. In logistic model, 
 \[
 s(y,\bm{x},\theta)=-y\bm{x}+\frac{\exp({\bm{x}^{\top}\theta})}{1+\exp({\bm{x}^{\top}\theta})}\bm{x},
 \]
 and
 $\widehat{\bm{H}}=\E_n\{\frac{\exp({\mathbf{X}^{\top}\hat{\theta}_0})}{(1+\exp({\mathbf{X}^{\top}\hat{\theta}_0}))^2}\mathbf{X}\mathbf{X}^{\top}\}.$
 
\end{exmp}

\begin{remark}
    We notice an independent concurrent work of this article, POst-Prediction Inference (POP-Inf) \citep{miao2023assumption}. They also aim to gain efficiency by modifying the estimating function. But in their settings, both $\mathcal{D}_L$ and $\mathcal{D}_U$ contain an auxiliary random vector $\mathbf{Z}$ that is predictive of $\text{Y}$, that is $\mathcal{D}_L=\{(y_i,\bm{x}_i,\bm{z}_i):i=1,\dots,n\}$, $\mathcal{D}_U=\{(\bm{x}_i,\bm{z}_i):i=n+1,\dots,n+N\}$, and $\mu(Z)$ produces predictions for $\text{Y}$. If $\mathbf{Z}=\mathbf{X}$, then their settings are the same as ours. In this case, their modified estimating function is
\begin{gather*}
	\frac{1}{n}\sum_{i=1}^n\bm{s}(y_i,\bm{x}_i,\theta)+\omega\odot\big\{\frac{1}{n}\sum_{i=1}^n \bm{s}(\mu(\bm{x}_i),\bm{x}_i,\theta)-\frac{1}{N}\sum_{i=n+1}^{n+N}\bm{s}(\mu(\bm{x}_i),\bm{x}_i,\theta)\big\},
\end{gather*}
where $\bm{\omega}$ is a weighting vector and $\odot$ denotes the element-wise multiplication. This corresponding to $\bm{T}=\text{diag}(\bm{\omega})$ in \eqref{equ:pdc modi-score}. They search for the best  $\hat{\bm{\omega}}$ by reducing the variance of each element in the estimator  successively until the algorithm converges. However, they do not give a closed form for the limit of $\hat{\bm{\omega}}$ with respect to the iterative times. So the asymptotic distribution of the limit of their estimator of $\theta^*$ is also unclear. Moreover, POP-Inf can only utilize one predictive model, while our PDC procedure can adapt to multiple predictive models.
\end{remark}

\section{Theoretical results}\label{sec:theorems}

In this section, we explore the theoretical properties of the  PDC procedure. In Subsection \ref{subsec:asymp normality}, we establish the asymptotic normality for the PDC estimator. In Subsection \ref{subsec:relationship with song}, we show the relationship between the PDC and semi-supervised methods, and extend the PDC to accommodate semi-supervised setting via cross-fitting strategy. In Subsection \ref{subsec:random model}, we further extend the PDC to accommodate the stochastic predictive model setting.

\subsection{Asymptotic normality}\label{subsec:asymp normality}
To proceed with the theoretical analysis, we need several assumptions.
\begin{asmp}\label{asmp:theta is interior}
	$\theta^*$ is an interior point of $\theta$, where $\theta$ is a compact subset of $\mathbb{R}^d$.
\end{asmp}

\begin{asmp}\label{asmp:consist init}
	$\hat{\theta}_0$ is a $\sqrt{n}$-consistent estimator of $\theta^*$, that is  $\Vert\hat{\theta}_0-\theta^*\Vert=O_p(n^{-1/2})$, $\widehat{\bm{H}}$ is a consistent estimator of $\bm{H}$ such that $\Vert\widehat{\bm{H}}-\bm{H}\Vert=o_p(1)$.
\end{asmp}

\begin{asmp}\label{asmp:score is L2 bounded}
	$\E\{\Vert\bm{s}(\text{Y},\mathbf{X},\theta^*)\Vert^2\}<\infty$, $\E\{\Vert\bm{f}(\mathbf{X},\theta^*)\Vert^2\}<\infty$.
\end{asmp}

\begin{asmp}\label{asmp:covf and H is nonsingular}
	$\cov(\bm{f}_{\theta^*})$ and $\bm{H}$ are nonsingular.
\end{asmp}

\begin{asmp}\label{asmp:smooth score}
	For every $\theta_1$ and $\theta_2$ in a neighborhood of $\theta^*$, there exist measurable functions $\dot{s}(\cdot,\cdot)$ and $\dot{f}(\cdot)$ with 
	\begin{gather*}
		\E\{\dot{s}^2(\text{Y},\mathbf{X})\}<\infty,\quad\E\{\dot{f}^2(\mathbf{X})\}<\infty
	\end{gather*}
	such that estimating function $\bm{s}(\cdot,\cdot,\cdot)$ and predictive estimating function $\bm{f}(\cdot,\cdot)$ satisfy the following smoothness conditions:
	\begin{gather*}
		\Vert\bm{s}(y,\bm{x},\theta_1)-\bm{s}(y,\bm{x},\theta_2)\Vert\le\dot{s}(y,\bm{x})\Vert\theta_1-\theta_2\Vert,\\
		\Vert\bm{f}(\bm{x},\theta_1)-\bm{f}(\bm{x},\theta_2)\Vert\le\dot{f}(\bm{x})\Vert\theta_1-\theta_2\Vert
	\end{gather*}
 for any $(y,\bm{x})\in\mathcal{Y}\times\mathcal{X}$.
\end{asmp}

Assumption \ref{asmp:theta is interior} ensures that $\theta^*$ is a reasonable target to estimate. Assumption \ref{asmp:consist init} ensures $\hat{\theta}_0$ is a good initial estimator, and can be satisfied by supervised estimator $\hat{\theta}_{sup}$.  Assumption \ref{asmp:score is L2 bounded}  enables the use of the central limit theorem.   Assumption \ref{asmp:covf and H is nonsingular} is a regularity condition. Assumption \ref{asmp:smooth score} is a smoothness condition to guarantee the Donsker property. Similar assumptions can be found in \cite{vaart_1998}. The following Theorem \ref{thm:one-step pdc} establishes the asymptotic normality of $\hat{\theta}_{pdc,1}$.

\begin{thm}\label{thm:one-step pdc}
	Under Assumptions \ref{asmp:theta is interior}-\ref{asmp:smooth score}, assuming that $N/(n+N)\rightarrow \eta$ for some constant $\eta\in(0,1]$,  then for any fixed $\gamma\in\mathbb{R}$ we have
	\begin{align}\label{equ:pdc aymp normal}
		\sqrt{n}(\hat{\theta}_{pdc,1}-\theta^*)\overset{D}{\rightarrow} N_d(\bm{0}, \bm{H}^{-1}\bm{\Gamma}(\gamma)\{\bm{H}^{\top}\}^{-1}),
	\end{align}
	where  $\bm{\Gamma}(\gamma)=\cov\{\bm{s}_{\theta^*}\}+(\gamma^2/\eta+2\gamma)\cov(\bm{s}_{\theta^*},\bm{f}_{\theta^*})\{\cov(\bm{f}_{\theta^*})\}^{-1}\cov(\bm{f}_{\theta^*},\bm{s}_{\theta^*})$.
\end{thm}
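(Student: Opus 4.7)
}

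The plan is to carry out a standard one-step (Newton-type) expansion of $\hat{S}(\hat{\theta}_0)$ around $\theta^*$, showing that the projection part contributes a data-independent pre-factor to leading order, and then apply the central limit theorem to the resulting linearization. Writing $\hat{\theta}_{pdc,1} - \theta^* = (\hat{\theta}_0 - \theta^*) - \widehat{\bm{H}}^{-1}\hat{S}(\hat{\theta}_0)$, I would first decompose
\begin{equation*}
\hat{S}(\hat{\theta}_0) = \mathbb{E}_n[\bm{s}_{\hat{\theta}_0}] + \gamma\widehat{\bm{T}}(\hat{\theta}_0)\bigl\{\mathbb{E}_n[\bm{f}_{\hat{\theta}_0}] - \mathbb{E}_N[\bm{f}_{\hat{\theta}_0}]\bigr\}
\end{equation*}
and handle the two pieces separately. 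For the first piece, the Lipschitz bound in Assumption \ref{asmp:smooth score} implies that $\{\bm{s}_{\theta} : \theta\in\Theta\}$ is Donsker, so $(\mathbb{E}_n - \mathbb{E})[\bm{s}_{\hat{\theta}_0} - \bm{s}_{\theta^*}] = o_p(n^{-1/2})$ (Lemma 19.24 in \cite{vaart_1998}); combined with the Fr\'echet differentiability of $\theta\mapsto\mathbb{E}[\bm{s}_\theta]$ at $\theta^*$ and $\mathbb{E}[\bm{s}_{\theta^*}] = 0$ this gives
\begin{equation*}
\sqrt{n}\,\mathbb{E}_n[\bm{s}_{\hat{\theta}_0}] = \sqrt{n}\,\mathbb{E}_n[\bm{s}_{\theta^*}] + \bm{H}\sqrt{n}(\hat{\theta}_0 - \theta^*) + o_p(1).
\end{equation*}

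For the second piece, the same Donsker/Lipschitz argument applied to $\bm{f}$ yields $\sqrt{n}\{\mathbb{E}_n[\bm{f}_{\hat{\theta}_0}] - \mathbb{E}_N[\bm{f}_{\hat{\theta}_0}]\} = \sqrt{n}\{\mathbb{E}_n[\bm{f}_{\theta^*}] - \mathbb{E}_N[\bm{f}_{\theta^*}]\} + o_p(1)$, so the error from replacing $\hat{\theta}_0$ by $\theta^*$ inside $\bm{f}$ is absorbed. I would then show $\widehat{\bm{T}}(\hat{\theta}_0) \overset{p}{\to} \bm{T}^*(\theta^*)$ using the uniform law of large numbers (continuity of the map from sample covariances to $\widehat{\bm{T}}$, via Assumptions \ref{asmp:score is L2 bounded}, \ref{asmp:covf and H is nonsingular}, \ref{asmp:smooth score}). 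Combining, $\sqrt{n}\hat{S}(\hat{\theta}_0) = \sqrt{n}\hat{S}_*(\theta^*) + \bm{H}\sqrt{n}(\hat{\theta}_0 - \theta^*) + o_p(1)$, where $\hat{S}_*(\theta^*) = \mathbb{E}_n[\bm{s}_{\theta^*}] + \gamma \bm{T}^*(\theta^*)\{\mathbb{E}_n[\bm{f}_{\theta^*}] - \mathbb{E}_N[\bm{f}_{\theta^*}]\}$. Plugging into the one-step formula and using $\widehat{\bm{H}}^{-1}\bm{H} \to \bm{I}$ along with $\sqrt{n}(\hat{\theta}_0 - \theta^*) = O_p(1)$ gives the telescoping
\begin{equation*}
\sqrt{n}(\hat{\theta}_{pdc,1} - \theta^*) = -\widehat{\bm{H}}^{-1}\sqrt{n}\,\hat{S}_*(\theta^*) + o_p(1).
\end{equation*}

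It then remains to apply the multivariate CLT to $\sqrt{n}\hat{S}_*(\theta^*)$. Exploiting the independence of $\mathcal{D}_L$ and $\mathcal{D}_U$, write $\sqrt{n}\hat{S}_*(\theta^*) = \sqrt{n}\,\mathbb{E}_n[\bm{s}_{\theta^*} + \gamma\bm{T}^*(\theta^*)\bm{f}_{\theta^*}] - \gamma\bm{T}^*(\theta^*)\sqrt{n/N}\cdot\sqrt{N}\,\mathbb{E}_N[\bm{f}_{\theta^*}]$, where the two summands are independent and each summand is an average of i.i.d.\ centered random vectors (after subtracting $\mathbb{E}[\bm{f}_{\theta^*}]$, which cancels since the two terms $\mathbb{E}_n[\bm{f}_{\theta^*}]$ and $\mathbb{E}_N[\bm{f}_{\theta^*}]$ enter with opposite signs). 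Under $N/(n+N)\to\eta$, we have $n/N \to (1-\eta)/\eta$ and hence the scaling factor contributes $\gamma^2\{\cov(\bm{f}_{\theta^*}) + (n/N)\cov(\bm{f}_{\theta^*})\} \to \gamma^2\eta^{-1}\cov(\bm{f}_{\theta^*})$ in the sandwich. A direct computation yields
\begin{equation*}
\cov(\bm{s}_{\theta^*}) + 2\gamma\,\bm{T}^*(\theta^*)\cov(\bm{f}_{\theta^*},\bm{s}_{\theta^*}) + (\gamma^2/\eta)\,\bm{T}^*(\theta^*)\cov(\bm{f}_{\theta^*})\bm{T}^{*\top}(\theta^*),
\end{equation*}
and substituting the definition of $\bm{T}^*(\theta^*)$ recovers $\bm{\Gamma}(\gamma)$. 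Multiplying by $-\widehat{\bm{H}}^{-1}$ on the left and $-\{\widehat{\bm{H}}^{\top}\}^{-1}$ on the right, together with Slutsky, gives the stated limit.

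The main obstacle is the stochastic equicontinuity step that allows $\hat{\theta}_0$ to be replaced by $\theta^*$ inside the projection term: naively the discrepancy $\mathbb{E}[\bm{f}_{\hat{\theta}_0} - \bm{f}_{\theta^*}]$ is only $O_p(n^{-1/2})$ while it is multiplied by $\sqrt{n}$, so one must exploit that the bias cancels between $\mathbb{E}_n$ and $\mathbb{E}_N$ at the same $\hat{\theta}_0$. This cancellation, together with the $\sqrt{n}$-consistency of $\hat{\theta}_0$ (Assumption \ref{asmp:consist init}) and the Donsker property guaranteed by Assumption \ref{asmp:smooth score}, is what drives the whole argument and must be laid out carefully.
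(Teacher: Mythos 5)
Your proposal is correct and follows essentially the same route as the paper's proof: the same four-way decomposition (Donsker-based replacement of $\hat{\theta}_0$ by $\theta^*$ in $\bm{s}$ with Taylor cancellation against $\widehat{\bm{H}}(\hat{\theta}_0-\theta^*)$, bias cancellation between $\E_n$ and $\E_N$ for the $\bm{f}$ term, consistency of $\widehat{\bm{T}}(\hat{\theta}_0)$, and a CLT on the leading term), and your covariance computation correctly recovers $\bm{\Gamma}(\gamma)$ via $1+n/N\to 1/\eta$. The obstacle you flag at the end is exactly the step the paper handles in its term $\RN{3}$.
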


\begin{remark}\label{rmk: gamma analysis}
	Recall \eqref{equ:sup asymp dist}, we know that $\sqrt{n}(\hat{\theta}_{sup}-\theta^*)\overset{D}{\rightarrow} \N_d(\bm{0},\bm{H}^{-1}\cov(\bm{s}_{\theta^*})\{\bm{H}^{\top}\}^{-1})$. Since the matrix 
	\[
	\cov(\bm{s}_{\theta^*},\bm{f}_{\theta^*})\{\cov(\bm{f}_{\theta^*})\}^{-1}\cov(\bm{f}_{\theta^*},\bm{s}_{\theta^*})
	\]
	is a positive semi-definite matrix, we need $(\gamma^2/\eta+2\gamma)\le0$ to obtain an estimator that is asymptotically more  efficient than $\hat{\theta}_{sup}$. Simple mathematical operations give that:
	\begin{itemize}
		\item When $\gamma\in \{-2\eta,0\}$, $\hat{\theta}_{pdc}$ is asymptotically as efficient as  $\hat{\theta}_{sup}$.
		\item When $\gamma\in(-2\eta,0)$:  
		\begin{itemize}
			\item If $\cov(\bm{s}_{\theta^*},\bm{f}_{\theta^*})\cov(\bm{f}_{\theta^*},\bm{s}_{\theta^*})$ is a positive definite matrix, then   $\hat{\theta}_{pdc}$ is asymptotically more efficient than $\hat{\theta}_{sup}$. 
			\item If $\bm{c}^{\top}\cov(\bm{s}_{\theta^*},\bm{f}_{\theta^*})\neq \bm{0}_q$ for a vector $\bm{c}\in\mathbb{R}^d$,   then $\bm{c}^{\top}\hat{\theta}_{pdc,1}$ is more asymptotically more efficient than $\bm{c}^{\top}\hat{\theta}_{sup}$.  Notice that we have assumed $\eta \neq 0$ in Theorem \ref{thm:one-step pdc}. Otherwise we may encounter a case where no PDC estimators are asymptotically more efficient than $\hat{\theta}_{sup}$.
		\end{itemize}
		\item The optimal $\gamma$ is $\gamma_{opt}=-\eta$. In this case,
		\begin{gather*}
			\bm{\Gamma}(\gamma_{opt})=\cov\{\bm{s}_{\theta^*}\}-\eta\cdot \cov(\bm{s}_{\theta^*},\bm{f}_{\theta^*})\{\cov(\bm{f}_{\theta^*})\}^{-1}\cov(\bm{f}_{\theta^*},\bm{s}_{\theta^*}).
		\end{gather*}
	\end{itemize}
	
\end{remark}

In order to have a more intuitive understanding of Theorem \ref{thm:one-step pdc}, we consider a 1-dimensional case, that is, $d=1$. For $\gamma\in(-2\eta,0)$, we have $\bm{\Gamma}(\gamma)-\bm{\Gamma}(0)\le 0$ with equality holds only when $\cov(\bm{s}_{\theta^*},\bm{f}_{\theta^*})=\bm{0}$. As the linear space spanned by $\bm{f}_{\theta^*}$'s  components is uncorrelated with $\bm{s}_{\theta^*}$, expecting an efficiency gain through $\bm{f}_{\theta^*}$ is unrealistic. Moreover, in this case, the estimation of $\bm{T}^*(\theta^*)$, specifically, $\widehat{\bm{T}}(\hat{\theta}_0)$, should well approach $\bm{0}$ provided that a good initial estimator $\hat{\theta}_0$ is used. This data-adaptive weight selection property is the key reason why the PDC estimator is always no worse than the supervised counterpart.

Since we know the optimal $\gamma_{opt}$, we can estimate it in practice. A natural choice is $\gamma_n=-N/(n+N)$. The following corollary provides the asymptotic distribution of $\hat{\theta}_{pdc,1}$ when we set $\gamma=\gamma_n$.
\begin{coro}\label{coro:estimated optimal gamma}
	Under the conditions of Theorem \ref{thm:one-step pdc}, if we choose $\gamma=-N/(n+N)$, then
	\begin{equation*}
		\sqrt{n}(\hat{\theta}_{pdc,1}-\theta^*)\overset{D}{\rightarrow} N_d(\bm{0}, \bm{H}^{-1}\bm{\Gamma}(\gamma_{opt})\{\bm{H}^{\top}\}^{-1}).
	\end{equation*}
\end{coro}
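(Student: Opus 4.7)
The plan is to reduce the corollary to Theorem \ref{thm:one-step pdc} by a Slutsky-type argument, exploiting the fact that $\gamma_n := -N/(n+N)$ is deterministic and converges to $\gamma_{opt} = -\eta$ as $N/(n+N) \to \eta$. First I would revisit the proof of Theorem \ref{thm:one-step pdc} and extract the linear expansion that must appear there. Writing $\bm{s}_1(\theta)=\E_n\{\bm{s}(\text{Y},\mathbf{X},\theta)\}$ and $\Delta(\theta) = \E_n\{\bm{f}(\mathbf{X},\theta)\}-\E_N\{\bm{f}(\mathbf{X},\theta)\}$, for any fixed $\gamma$ one has
\begin{equation*}
    \sqrt{n}\,\hat{S}(\hat{\theta}_0) \;=\; \sqrt{n}\,\bm{s}_1(\theta^*) + \bm{H}\sqrt{n}(\hat{\theta}_0-\theta^*) + \gamma\,\bm{T}^*(\theta^*)\,\sqrt{n}\,\Delta(\theta^*) + r_n(\gamma),
\end{equation*}
where the remainder $r_n(\gamma)=o_p(1)$. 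Combining this with $\hat{\theta}_{pdc,1}-\theta^* = (\hat{\theta}_0-\theta^*) - \widehat{\bm{H}}^{-1}\hat{S}(\hat{\theta}_0)$ and Assumption \ref{asmp:consist init} produces, after cancellation of the $\hat{\theta}_0$ contribution, the representation
\begin{equation*}
    \sqrt{n}(\hat{\theta}_{pdc,1}-\theta^*) \;=\; -\bm{H}^{-1}\bigl\{\sqrt{n}\,\bm{s}_1(\theta^*) + \gamma\,\bm{T}^*(\theta^*)\sqrt{n}\,\Delta(\theta^*)\bigr\} + o_p(1).
\end{equation*}

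Next I would show that replacing the fixed $\gamma$ by the deterministic sequence $\gamma_n$ leaves the limit unchanged. By Assumption \ref{asmp:score is L2 bounded} and the central limit theorem applied to the independent labeled and unlabeled samples, $\sqrt{n}\,\Delta(\theta^*)=O_p(1)$. Since $\gamma_n \to \gamma_{opt}$ deterministically, $(\gamma_n - \gamma_{opt})\sqrt{n}\,\Delta(\theta^*) = o_p(1)$, hence
\begin{equation*}
    \gamma_n\,\bm{T}^*(\theta^*)\sqrt{n}\,\Delta(\theta^*) \;=\; \gamma_{opt}\,\bm{T}^*(\theta^*)\sqrt{n}\,\Delta(\theta^*) + o_p(1).
\end{equation*}
Substituting this into the expansion gives exactly the expansion of Theorem \ref{thm:one-step pdc} at $\gamma=\gamma_{opt}$, so the limiting distribution is $\N_d(\bm{0},\bm{H}^{-1}\bm{\Gamma}(\gamma_{opt})\{\bm{H}^{\top}\}^{-1})$ by the same calculation of $\bm{\Gamma}$ used there.

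The only delicate point is ensuring that the remainder term $r_n(\gamma)$ is $o_p(1)$ uniformly over $\gamma$ in a neighborhood of $\gamma_{opt}$, so that it can absorb the replacement $\gamma \leadsto \gamma_n$. This reduces to showing $\widehat{\bm{T}}(\hat{\theta}_0) \overset{p}{\to} \bm{T}^*(\theta^*)$ and that the empirical process increments $\sqrt{n}\{(\bm{s}_1,\Delta)(\hat{\theta}_0)-(\bm{s}_1,\Delta)(\theta^*)\}$ are $o_p(1)$; both follow from the Lipschitz/Donsker conditions in Assumption \ref{asmp:smooth score} together with the $\sqrt{n}$-consistency of $\hat{\theta}_0$ (Assumption \ref{asmp:consist init}) via a standard stochastic equicontinuity argument. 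Since $\gamma_n$ is a bounded deterministic sequence, no additional uniformity is needed beyond what already underlies Theorem \ref{thm:one-step pdc}, and the corollary follows.
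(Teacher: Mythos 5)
Your proposal is correct and follows essentially the same route as the paper: the paper reuses the decomposition from the proof of Theorem \ref{thm:one-step pdc}, notes that the remainder terms stay $o_p(1)$ because $\gamma_n$ is bounded, and handles the leading term via $(\gamma_n-\gamma_{opt})\sqrt{n}\bm{T}^*(\theta^*)\{\E_n(\bm{f}_{\theta^*})-\E_N(\bm{f}_{\theta^*})\}=(\gamma_n-\gamma_{opt})O_p(1)=o_p(1)$, which is exactly your Slutsky-type replacement of $\gamma_n$ by $\gamma_{opt}$.
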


\subsection{Relationship with semi-supervised literature}\label{subsec:relationship with song}

If we choose  $\bm{f}(\bm{x},\theta)$ as some deterministic function $Z(\bm{x})$, $\gamma=-N/(n+N)$, and denote 
\begin{gather*}
	\tilde{Z}(\bm{x})= (1,Z(\bm{x})^{\top})^{\top}, \bm{U}_n(\theta)=\E_n\{\bm{s}(\text{Y},\mathbf{X},\theta)\tilde{Z}(\mathbf{X})^{\top}\}[\E_n\{\tilde{Z}(\mathbf{X})\tilde{Z}(\mathbf{X})^{\top}\}]^{-1},
\end{gather*}
then  
\begin{equation*}
	\hat{S}(\theta)=\bm{U}_n(\theta)\frac{1}{n+N}\sum_{i=1}^{n+N}\tilde{Z}(\bm{x}_i).
\end{equation*}
This is equivalent to using the method of \cite{Zhang2019ssmean} to estimate each element of $\E\{\bm{s}(\text{Y},\mathbf{X},\theta)\}$ based on labeled data $\{(\bm{s}(y_i,\bm{x}_i,\theta),\tilde{Z}(\bm{x}_i)),i=1,\dots,n\}$ and unlabeled data $\{\tilde{Z}(\bm{x}_i),i=n+1,\dots,n+N\}$.
By the projection property,  $\hat{S}(\theta)$ can also be written as 
\begin{gather*}
	\hat{S}(\theta)=\frac{1}{n+N}\sum_{i=1}^n\bm{s}(y_i,\bm{x}_i,\theta)+\frac{1}{n+N}\sum_{i=n+1}^{n+N}\bm{U}_n(\theta)\tilde{Z}(\bm{x}_i).
\end{gather*}
Searching for the zero point of this $\hat{S}(\theta)$ is equivalent to finding the minimizer of the following modified  risk function
\begin{gather*}
	\hat{R}(\theta)=\frac{1}{n+N}\sum_{i=1}^n l(y_i,\bm{x}_i,\theta)+\frac{1}{n+N}\sum_{i=n+1}^{n+N}\widehat{\bm{\beta}}_l(\theta)^{\top}\tilde{Z}(\bm{x}_i),
\end{gather*}
where $\widehat{\bm{\beta}}_l(\theta)=[\E_n \{\tilde{Z}(\mathbf{X})\tilde{Z}(\mathbf{X})^{\top}\}]^{-1}\E_n \{\tilde{Z}(\mathbf{X}) l(\text{Y},\mathbf{X},\theta)\}$.
This is the weighted loss function with the optimal weight in \cite{song2023general}. Thus, our PDC procedure can be seen as an extension of the procedure in \cite{song2023general} to accommodate the pre-trained ML models. 

In the classic semi-supervised setting where a pre-trained predictive model is unavailable, the PDC procedure can still be applied in a cross-fitting form \citep{cherno2018, zhang2022high, chakrabortty2022semi, zrnic2024cross}.  We split $\mathcal{D}_L$ into $K$ folds $\mathcal{D}_{L}^{(1)}, \dots, \mathcal{D}_{L}^{(K)}$, each of size $n/K$, and split $\mathcal{D}_U$ into $K$ folds $\mathcal{D}_{U}^{(1)}, \dots, \mathcal{D}_{U}^{(K)}$, each of size $N/K$, where we have assumed $n$ and $N$ are divisible by $K$ for simplicity. We utilize all folds but $\mathcal{D}_{L}^{(j)}$ to train the predictive estimating function $\hat{\bm{f}}^{(-j)}(\cdot,\cdot):\mathcal{X}\times\Theta\rightarrow\mathbb{R}^{q}$ and estimate an initial $\sqrt{n}$-consistent estimator $\hat{\theta}^{(-j)}$ for $\theta^*$. Then we construct the PDC score for the $j$-th fold
\begin{gather*}
    \hat{S}^{(j)}=\E_{\mathcal{D}_{L}^{(j)}}\Big\{\bm{s}_{\hat{\theta}^{(-j)}}+\gamma\widehat{\bm{T}}_{ss}^{(j)}(\hat{\theta}^{(-j)})\big\{\hat{\bm{f}}^{(-j)}_{\hat{\theta}^{(-j)}}-\E_{\mathcal{D}_{U}^{(j)}}(\hat{\bm{f}}^{(-j)}_{\hat{\theta}^{(-j)}})\big\}\Big\},
\end{gather*}
where $\E_{\mathcal{D}_{L}^{(j)}}(g)=K/n\sum_{i\in\mathcal{D}_{L}^{(j)}}g(\bm{x}_i)$, $\E_{\mathcal{D}_{U}^{(j)}}(g)=K/N\sum_{i\in\mathcal{D}_{U}^{(j)}}g(\bm{x}_i)$ for any measurable function $g(\cdot)$ defined in $\mathcal{X}$, and
\begin{gather*}
    \widehat{\bm{T}}_{ss}^{(j)}(\theta)=\cov_{\mathcal{D}_{L}^{(j)}}(\bm{s}_{\theta},\hat{\bm{f}}^{(-j)}_{\theta})\{\cov_{\mathcal{D}_L^{(j)}}(\hat{\bm{f}}_{\theta}^{(-j)})\}^{-1},
\end{gather*}
where $\cov_{\mathcal{D}_L^{(j)}}(\cdot)$ denotes the sample covariance based on the dataset $\mathcal{D}_{L}^{(j)}$. The one-step PDC estimator for the $j$-th fold is $\hat{\theta}^{(-j)}-\widehat{\bm{H}}^{-1}\hat{S}^{(j)}$.
The semi-supervised PDC estimator is the aggregation of these estimators 
\begin{gather*}
    \hat{\theta}_{sspdc,1}=\frac{1}{K}\sum_{j=1}^K \big(\hat{\theta}^{(-j)}-\widehat{\bm{H}}^{-1}\hat{S}^{(j)}\big).
\end{gather*}
Theorem \ref{thm:cross pdc} shows that, under suitable conditions, $\hat{\theta}_{sspdc,1}$ has the same asymptotic distribution as $\hat{\theta}_{pdc,1}$.

\begin{thm}\label{thm:cross pdc}
    Assume there exists a fixed real-valued function $\bm{f}$ such that Assumptions \ref{asmp:theta is interior}-\ref{asmp:smooth score} and Assumptions \ref{asmp:cross T moment}-\ref{asmp:cross f moment} in the Appendix hold. Further more, assume that $N/(n+N)\rightarrow \eta$ for some constant $\eta\in(0,1]$.  Then, for any fixed $\gamma\in\mathbb{R}$, we have
	\begin{align}\label{equ:cspdc aymp normal}
		\sqrt{n}(\hat{\theta}_{sspdc,1}-\theta^*)\overset{D}{\rightarrow} \N_d(\bm{0}, \bm{H}^{-1}\bm{\Gamma}(\gamma)\{\bm{H}^{\top}\}^{-1}).
	\end{align}
\end{thm}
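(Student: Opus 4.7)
The strategy is to replicate the proof of Theorem~\ref{thm:one-step pdc}, with sample splitting used to handle the data-dependent nuisance $\hat{\bm{f}}^{(-j)}$ without imposing Donsker-type entropy conditions on the class containing it. I would proceed in three steps.

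\textbf{Step 1 (Fold-wise linearization).} Fix $j$ and set $\hat{\theta}_j := \hat{\theta}^{(-j)}-\widehat{\bm{H}}^{-1}\hat{S}^{(j)}$. Using Assumption~\ref{asmp:smooth score}, Taylor-expand $\hat{S}^{(j)}(\theta)$ around $\theta^*$. The $\theta$-derivative of $\E_{\mathcal{D}_L^{(j)}}\bm{s}_\theta$ equals $\bm{H}+o_p(1)$, while the derivative of the correction $\gamma\widehat{\bm{T}}^{(j)}_{ss}(\theta)\{\hat{\bm{f}}^{(-j)}_\theta-\E_{\mathcal{D}_U^{(j)}}\hat{\bm{f}}^{(-j)}_\theta\}$ is $O_p(n^{-1/2})$, because every term produced by the product rule is multiplied either by the centered empirical average $(\E_{\mathcal{D}_L^{(j)}}-\E_{\mathcal{D}_U^{(j)}})\hat{\bm{f}}^{(-j)}_{\theta^*}$ or by the derivative version of the same, both of which are $O_p(n^{-1/2})$. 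Combined with $\|\hat{\theta}^{(-j)}-\theta^*\|=O_p(n^{-1/2})$ and $\widehat{\bm{H}}^{-1}\bm{H}=\bm{I}+o_p(1)$ from Assumption~\ref{asmp:consist init}, this produces the one-step identity
\begin{equation*}
\hat{\theta}_j-\theta^* \;=\; -\widehat{\bm{H}}^{-1}\hat{S}^{(j)}(\theta^*) + o_p(n^{-1/2}).
\end{equation*}

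\textbf{Step 2 (Cross-fitting substitution).} Condition on $\mathcal{D}_L^{(-j)}$; then $\mathcal{D}_L^{(j)}\cup \mathcal{D}_U^{(j)}$ is i.i.d.\ from the original law and independent of the trained nuisances $(\hat{\bm{f}}^{(-j)},\hat{\theta}^{(-j)})$. A conditional Chebyshev bound gives $\|(\E_{\mathcal{D}_L^{(j)}}-\E_{\mathcal{D}_U^{(j)}})\{\hat{\bm{f}}^{(-j)}_{\theta^*}-\bm{f}_{\theta^*}\}\|=O_p(n^{-1/2}\,\|\hat{\bm{f}}^{(-j)}_{\theta^*}-\bm{f}_{\theta^*}\|_{L^2})$, which is $o_p(n^{-1/2})$ under the appendix convergence/moment hypotheses (Assumptions~\ref{asmp:cross T moment}-\ref{asmp:cross f moment}). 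An analogous consistency argument (sample covariances plus continuous mapping) yields $\widehat{\bm{T}}^{(j)}_{ss}(\theta^*)\overset{p}{\to}\bm{T}^*(\theta^*)$. Substituting in,
\begin{equation*}
\hat{S}^{(j)}(\theta^*) \;=\; \E_{\mathcal{D}_L^{(j)}}\bm{s}_{\theta^*} + \gamma\bm{T}^*(\theta^*)\bigl\{\E_{\mathcal{D}_L^{(j)}}\bm{f}_{\theta^*}-\E_{\mathcal{D}_U^{(j)}}\bm{f}_{\theta^*}\bigr\} + o_p(n^{-1/2}).
\end{equation*}

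\textbf{Step 3 (Aggregate and CLT).} The partial averages telescope: $\frac{1}{K}\sum_j \E_{\mathcal{D}_L^{(j)}}g=\E_n g$ and $\frac{1}{K}\sum_j \E_{\mathcal{D}_U^{(j)}}g=\E_N g$, so averaging Step~1 across $j$ gives
\begin{equation*}
\hat{\theta}_{sspdc,1}-\theta^* \;=\; -\widehat{\bm{H}}^{-1}\bigl[\E_n\bm{s}_{\theta^*}+\gamma\bm{T}^*(\theta^*)\{\E_n\bm{f}_{\theta^*}-\E_N\bm{f}_{\theta^*}\}\bigr] + o_p(n^{-1/2}).
\end{equation*}
This is exactly the asymptotic linearization used to prove Theorem~\ref{thm:one-step pdc} (with $\widehat{\bm{T}}(\theta^*)$ replaced by its deterministic limit, which costs only an $o_p(n^{-1/2})$ term when multiplied against the mean-zero factor $\E_n\bm{f}_{\theta^*}-\E_N\bm{f}_{\theta^*}$). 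Slutsky together with the multivariate CLT under Assumption~\ref{asmp:score is L2 bounded} then delivers the stated normal limit with covariance $\bm{H}^{-1}\bm{\Gamma}(\gamma)\{\bm{H}^\top\}^{-1}$.

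\textbf{Main obstacle.} The delicate part is Step~2. Because $\widehat{\bm{T}}_{ss}^{(j)}$ depends on $\hat{\bm{f}}^{(-j)}$ and is re-used in the fold-$j$ score that we also evaluate on $\mathcal{D}_L^{(j)}$, one must strip off the nuisance-substitution error by a product-rate argument: the centered difference of empirical means of $\hat{\bm{f}}^{(-j)}-\bm{f}$ is $O_p(n^{-1/2})$ conditional on training, while $\hat{\bm{f}}^{(-j)}-\bm{f}$ is itself $o_p(1)$ in $L^2$, and the product sits at $o_p(n^{-1/2})$, which is precisely the slack needed. Making this rigorous conditional on $\mathcal{D}_L^{(-j)}$, and checking that the appendix assumptions give uniform (in the training sample) moment control of the nuisance, is the heart of the proof.
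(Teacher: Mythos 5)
Your proposal is correct and follows essentially the same route as the paper's proof: fold-wise linearization reducing each one-step update to $-\widehat{\bm{H}}^{-1}\hat{S}^{(j)}(\theta^*)+o_p(n^{-1/2})$, replacement of the trained nuisance $\hat{\bm{f}}^{(-j)}$ by the fixed $\bm{f}$ via the conditional-independence/product-rate bound $O_p(n^{-1/2}\Vert\hat{\bm{f}}^{(-j)}-\bm{f}\Vert_{L^2})=o_p(n^{-1/2})$ together with consistency of $\widehat{\bm{T}}^{(j)}_{ss}$, and telescoping the fold averages back to $\E_n$ and $\E_N$ before applying the CLT and Slutsky. The paper merely reorganizes this by first proving normality of an intermediate oracle cross-fit estimator built from $\bm{f}$ and then showing the two aggregated scores differ by $o_p(n^{-1/2})$, which is the same argument in a different order.
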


\subsection{Extension: stochastic predictive model}\label{subsec:random model}
In Subsection \ref{subsec:asymp normality}, we implicitly assume that $\bm{f}(\cdot,\theta)$ is a deterministic function. However, in practice, the black-box ML model $\mu(\cdot)$ could be a random function. We now extend Theorem \ref{thm:one-step pdc} to accommodate this scenario. Here, we rewrite $\mu(\cdot)$ as $\mu_{\zeta}(\cdot)$, $\bm{f}(\cdot,\theta)$ as $\bm{f}_{\zeta}(\cdot,\theta)$, where $\zeta$ represents the stochastic noise. We need the following conditions.

\begin{asmp}\label{asmp:stochastic asmp}

 \begin{enumerate}[(a)]
     \item \label{asmp:f decompose} For any $\theta\in\Theta$, $\bm{f}_{\zeta}(\cdot,\theta)$ can be decomposed as $\bm{f}_{\zeta}(\cdot,\theta)=\tilde{\bm{f}}(\cdot,\theta)+\bm{b}_{\zeta}(\cdot)$ for some deterministic function $\tilde{\bm{f}}(\cdot,\cdot):\mathbb{R}^{p}\times\mathbb{R}^d\rightarrow\mathbb{R}^q$ and stochastic function $\bm{b}_{\zeta}(\cdot):\mathbb{R}^p\rightarrow\mathbb{R}^q$.
     \item \label{asmp:smooth s, tilde f} $\{\bm{s}(\cdot,\cdot,\theta):\theta\in\Theta\}$ is a $\Pr_{\text{Y},\mathbf{X}}$-Donsker class, \label{asmp:smooth stochastic f} $\{\tilde{\bm{f}}(\cdot,\theta):\theta\in\Theta\}$ is a $\Pr_{\mathbf{X}}$-Donsker class.
 \end{enumerate}
	
\end{asmp}
Assumption \ref{asmp:stochastic asmp}(\ref{asmp:f decompose}) is a structural assumption which ensures that the stochasticity of $\zeta$ does not affect the estimation of $\theta$.  In practice, if we choose $\bm{f}_{\zeta}(\bm{x},\theta)$ as $\bm{s}(\mu_{\zeta}(\bm{x}),\bm{x},\theta)$  for some random function $\mu_{\zeta}(\cdot)$, then Assumption \ref{asmp:stochastic asmp}(\ref{asmp:f decompose}) is satisfied by most common estimating functions, e.g., the estimating functions in Examples \ref{exmp:mean estimation}, \ref{exmp:glm}. Assumption \ref{asmp:stochastic asmp}(\ref{asmp:smooth s, tilde f}) is a generalized version of Assumption \ref{asmp:smooth score}.

Next, we extend Theorem \ref{thm:one-step pdc} to accommodate the random predictive model. Recall that $\E(\cdot)$ is defined as the expectation with respect to all the stochastic terms, so now the expectations in Theorem \ref{thm:random f} are taken with respect to $(\text{Y},\mathbf{X},\zeta)$.
\begin{thm}\label{thm:random f}
Suppose that Assumptions \ref{asmp:theta is interior}-\ref{asmp:covf and H is nonsingular} and \ref{asmp:stochastic asmp} hold, and $N/(n+N)\rightarrow \eta$ for some constant $\eta\in(0,1]$.   Then for any fixed $\gamma\in\mathbb{R}$, \eqref{equ:pdc aymp normal} still holds with $\bm{\Gamma}(\gamma)=\cov\{\bm{s}_{\theta^*}\}+(\gamma^2/\eta+2\gamma)\cov(\bm{s}_{\theta^*},\bm{f}_{\zeta;\theta^*})\{\cov(\bm{f}_{\zeta;\theta^*})\}^{-1}\cov(\bm{f}_{\zeta;\theta^*},\bm{s}_{\theta^*})$, where $\bm{f}_{\zeta;\theta}=\bm{f}_{\zeta}(\cdot,\theta)$.
\end{thm}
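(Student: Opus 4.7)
The plan is to reduce Theorem~\ref{thm:random f} to Theorem~\ref{thm:one-step pdc} by conditioning on the ML-model randomness $\zeta$. The structural hypothesis in Assumption~\ref{asmp:stochastic asmp}(\ref{asmp:f decompose}) is the hinge of the argument: because $\bm{b}_\zeta(\bm{x})$ does not depend on $\theta$, we have the identity $\bm{f}_\zeta(\bm{x},\theta_1)-\bm{f}_\zeta(\bm{x},\theta_2)=\tilde{\bm{f}}(\bm{x},\theta_1)-\tilde{\bm{f}}(\bm{x},\theta_2)$, so every $\theta$-increment of $\bm{f}_\zeta$ is governed by the deterministic function $\tilde{\bm{f}}$. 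Consequently, once $\zeta$ is frozen the predictive estimating function behaves like a fixed function of $\bm{x}$ and we sit inside the deterministic framework of Theorem~\ref{thm:one-step pdc}.

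Concretely, I would carry out the same three-step pattern as in the proof of Theorem~\ref{thm:one-step pdc}. First, a Taylor expansion, together with the $\sqrt{n}$-consistency of $\hat{\theta}_0$ and $\widehat{\bm{H}}=\bm{H}+o_p(1)$, gives the one-step linearisation $\sqrt{n}(\hat{\theta}_{pdc,1}-\theta^*)=-\widehat{\bm{H}}^{-1}\sqrt{n}\hat{S}(\theta^*)+o_p(1)$. Second, Assumption~\ref{asmp:stochastic asmp}(\ref{asmp:smooth s, tilde f}) supplies the Donsker property for $\bm{s}$ and for $\tilde{\bm{f}}$ (replacing the Lipschitz Assumption~\ref{asmp:smooth score}); combined with the identity above, this yields conditional-on-$\zeta$ stochastic equicontinuity for the empirical process indexed by $\theta$ applied to $\bm{f}_\zeta$, so the usual replacement of $\hat{\theta}_0$ by $\theta^*$ inside all empirical averages is justified and the sample coefficient $\widehat{\bm{T}}(\hat{\theta}_0)$ converges in probability to $\bm{T}^*(\theta^*)=\cov(\bm{s}_{\theta^*},\bm{f}_{\zeta;\theta^*})\{\cov(\bm{f}_{\zeta;\theta^*})\}^{-1}$.

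Third, I would establish the joint CLT of $\sqrt{n}\E_n\bm{s}_{\theta^*}$ and $\sqrt{n}(\E_n-\E_N)\bm{f}_{\zeta;\theta^*}$ by conditioning on $\zeta$: given $\zeta$, the labeled and unlabeled samples are independent i.i.d.\ draws from $\Pr_{\text{Y},\mathbf{X}}$ and $\Pr_{\mathbf{X}}$, so the bivariate CLT used in the proof of Theorem~\ref{thm:one-step pdc} applies verbatim. The $\theta$-free piece $\bm{b}_\zeta$ enters $(\E_n-\E_N)\bm{f}_{\zeta;\theta^*}$ only through centred differences, and its shared $\zeta$-conditional mean $\E[\bm{b}_\zeta(\mathbf{X})\mid\zeta]$ cancels in $\E_n-\E_N$. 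Plugging the limit $\bm{T}^*(\theta^*)$ back into the linearisation then produces the algebraic form of $\bm{\Gamma}(\gamma)$ exactly as in Theorem~\ref{thm:one-step pdc}, now with $\bm{f}_{\zeta;\theta^*}$ in place of $\bm{f}_{\theta^*}$.

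The main obstacle I anticipate is the uniform control of $\widehat{\bm{T}}(\cdot)$ near $\theta^*$ when $\bm{f}_\zeta$ is itself random; a brute-force Donsker hypothesis on the random class $\{\bm{f}_\zeta(\cdot,\theta):\theta\in\Theta\}$ uniformly in $\zeta$ would be hard to justify. Assumption~\ref{asmp:stochastic asmp}(\ref{asmp:f decompose}) sidesteps this by forcing the $\theta$-oscillations of $\bm{f}_\zeta$ to coincide with those of the deterministic Donsker class $\{\tilde{\bm{f}}(\cdot,\theta):\theta\in\Theta\}$, so the empirical-process machinery used in Theorem~\ref{thm:one-step pdc} transfers through essentially unchanged on a fibre of fixed $\zeta$, and no Donsker-type hypothesis is needed on the $\zeta$-dependent piece $\bm{b}_\zeta$.
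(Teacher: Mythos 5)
Your proposal follows essentially the same route as the paper's proof: both exploit Assumption \ref{asmp:stochastic asmp}(\ref{asmp:f decompose}) to observe that the $\theta$-increments of $\bm{f}_{\zeta}$ coincide with those of the deterministic Donsker class $\{\tilde{\bm{f}}(\cdot,\theta)\}$, so the decomposition and term-by-term analysis of Theorem \ref{thm:one-step pdc} carry over with $\tilde{\bm{f}}_{\hat{\theta}_0}-\tilde{\bm{f}}_{\theta^*}$ replacing $\bm{f}_{\hat{\theta}_0}-\bm{f}_{\theta^*}$, and the $\theta$-free piece $\bm{b}_{\zeta}$ only enters through the centred difference $\E_n-\E_N$. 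Your extra remarks on conditioning on $\zeta$ for the CLT and on the cancellation of $\E\{\bm{b}_{\zeta}(\mathbf{X})\mid\zeta\}$ simply make explicit what the paper leaves implicit when it says the rest "follows the same procedure."
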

 Now we can use $\hat{\theta}_{pdc,1}$ to construct confidence interval for $\bm{c}^{\top}\theta^*$, where $\bm{c}$ is any vector from $\mathbb{R}^d$. Under the conditions of Theorem \ref{thm:one-step pdc}, the asymptotic $(1-\alpha)$-size  confidence interval for $\bm{c}^{\top}\theta^*$ can be written as 
	\begin{equation*}
			\Big[\bm{c}^{\top}\hat{\theta}_{pdc,1}-\frac{z_{1-\alpha/2}}{\sqrt{n}}\Big\{\bm{c}^{\top}\widehat{\bm{V}}(\gamma)\bm{c}\Big\}^{1/2},\bm{c}^{\top}\hat{\theta}_{pdc,1}+\frac{z_{1-\alpha/2}}{\sqrt{n}}\Big\{\bm{c}^{\top}\widehat{\bm{V}}(\gamma)\bm{c}\Big\}^{1/2}\Big],
	\end{equation*}
where $\widehat{\bm{V}}(\gamma)=\widehat{\bm{H}}^{-1}\widehat{\bm{\Gamma}}(\gamma)\{\widehat{\bm{H}}^{\top}\}^{-1}$, and $\widehat{\bm{H}}$, $\widehat{\bm{\Gamma}}(\gamma)$ are consistent estimators for $\bm{H}$, $\bm{\Gamma}(\gamma)$ respectively.

For consistent estimators for $\bm{H}$ and $\bm{\Gamma}(\gamma)$, we can choose
\begin{gather*}
	\widehat{\bm{H}}=\E_n\{\nabla_{\theta^{\top}}\bm{s}(\text{Y},\mathbf{X},\hat{\theta})\},\\
	\widehat{\bm{\Gamma}}(\gamma)=\cov_n\{\bm{s}_{\hat{\theta}}\}+(\frac{1}{\eta}\gamma^2+2\gamma)\cov_n(\bm{s}_{\hat{\theta}},\bm{f}_{\hat{\theta}})\{\cov_n(\bm{f}_{\hat{\theta}})\}^{-1}\cov_n(\bm{f}_{\hat{\theta}},\bm{s}_{\hat{\theta}}),
\end{gather*}
where $\hat{\theta}$  is a consistent estimator of $\theta^*$ that can be replaced by any appropriate estimator of $\theta^*$, for example, the aforementioned one-step PDC estimator $\hat{\theta}_{pdc,1}$. To fairly compare different estimators, we choose $\hat{\theta}=\hat{\theta}_{sup}$ since it only utilizes the labeled data. Then the asymptotic $(1-\alpha)$-size  confidence interval for $\bm{c}^{\top}\theta^*$ constructed by supervised estimator $\hat{\theta}_{sup}$ can be written as
\begin{equation*}
	\Big[\bm{c}^{\top}\hat{\theta}_{sup}-\frac{z_{1-\alpha/2}}{\sqrt{n}}\{\bm{c}^{\top}\widehat{\bm{V}}(0)\bm{c}\}^{1/2},\bm{c}^{\top}\hat{\theta}_{sup}+\frac{z_{1-\alpha/2}}{\sqrt{n}}\{\bm{c}^{\top}\widehat{\bm{V}}(0)\bm{c}\}^{1/2}\Big].
\end{equation*}

\section{Experiments}\label{sec:experiments}
\subsection{Simulation studies}\label{subsec:simulation}
In this section, we investigate the numerical performance of PDC estimator in various settings in terms of coverage probability and width of confidence intervals (CI). We compare the performance of PDC estimator to that of the following estimators: PPI++, PPI \citep{angelopoulos2023prediction}, semi-supervised estimator \citep{Zhang2019ssmean,azriel2022semi}, and supervised estimator $\hat{\theta}_{sup}$. Since the methods DESSL \citep{schmutz2023dont} and PPI++ \citep{angelopoulos2023ppi++} are almost equivalent in our simulation settings, we only report the results of PPI++. For the PDC, PPI++, PPI, we use the one-step version. Confidence levels are all set to $\alpha=0.1$. All simulation results are based on 1000 replications. We generate data from model  $\text{Y}=1+\bm{\beta}^\top\mathbf{X}+\bm{\beta}^\top(\mathbf{X}^2-\bm{1})+\zeta$ with  $\mathbf{X}\sim \N(0,\bm{I})$, $\zeta\sim \N(0,1)$ and $\bm{\beta}^{\top}=(\bm{\beta}_{(1)},-1,-2,-2)$, The value of $\bm{\beta}_{(1)}$ may be different under different settings. Unless specified, the sample size for the labeled data is $n=1000$, while for the unlabeled data it is $N=5000$. Predictive estimating functions $\bm{f}(\bm{x},\theta)$ are set to $\bm{s}(\mu(\bm{x}),\bm{x},\theta)$ with some predictive model $\mu(\cdot)$.

We report simulation results in terms of the coverage probability of CI and the ratio of CI's width to that of the supervised counterpart. The reported width ratio (WR) is the average of width ratios from 1000 replications. We consider the following two scenarios:
\begin{enumerate}[\text{Scenario} 1]
	\item (Mean estimation) The estimation target is $\theta^*=\E(\text{Y})$. $\bm{s}(y,\bm{x},\theta)=\theta-y$.  Since $\theta^*$ is a 1-dimensional target and $\bm{f}(\bm{x},\theta)=\bm{s}(\mu(\bm{x}),\bm{x},\theta)$, PDC is equivalent to PPI++ in this setup. We compare the performance of PDC estimator to the performance of three estimators: semi-supervised estimator \citep{Zhang2019ssmean}, PPI, and supervised estimator. 
	\item (Linear regression) The estimation target is $\theta^*_{(1)}$, where $\theta^*=\arg\min_{\theta}\E\{(\text{Y}-\theta^\top\mathbf{X})^2\}$. $\bm{s}(y,\bm{x},\theta)=(\bm{x}^\top\theta-y)\bm{x}$. We compare the performance of PDC to the performance of four existing methods: semi-supervised estimator \citep{azriel2022semi}, PPI++, PPI, and supervised estimator. 
\end{enumerate}

\subsubsection{Mean estimation}
In this scenario, we consider three different settings, which take into account the influence of the magnitude of stochastic noise, labeled sample size, and unlabeled sample size on the performance of different methods.   For each setting, we fix $\bm{\beta}_{(1)}=9$, and generate $\zeta'$ for each observation.  
\begin{enumerate}[\text{Setting} 1]
    \item \label{setting 1} The predictive model is set as $\mu_{\zeta'}(\bm{x})=(1+\epsilon\cdot\zeta')(\bm{\beta}^\top\bm{x}^2)$, where $\zeta'\sim \N(0,1)$, $\epsilon$ is successively set to each value in the set $\{0,0.2, 0.4,\dots,2\}$. 
    \item \label{setting 2}The predictive model is set as $\mu_{\zeta'}(\bm{x})=(1+0.5\cdot\zeta')(\bm{\beta}^\top\bm{x}^2)$, where $\zeta'\sim \N(0,1)$. $n$ is successively set to each value in the set ${500,1000,\dots,5500}$, $N$ is fixed at 5000. 
    \item \label{setting 3}The predictive model is set as $\mu_{\zeta'}(\bm{x})=(1+0.5\cdot\zeta')(\bm{\beta}^\top\bm{x}^2)$, where $\zeta'\sim \N(0,1)$. $N$ is successively set to each value in the set $\{2000,4000,\dots,22000\}$, $n$ is fixed at 1000.
\end{enumerate}
\begin{figure}[htbp]
	\centering
	
	\subfigure[Coverage probability for $\E \text{Y}$]{
		\includegraphics[width=0.45\textwidth]{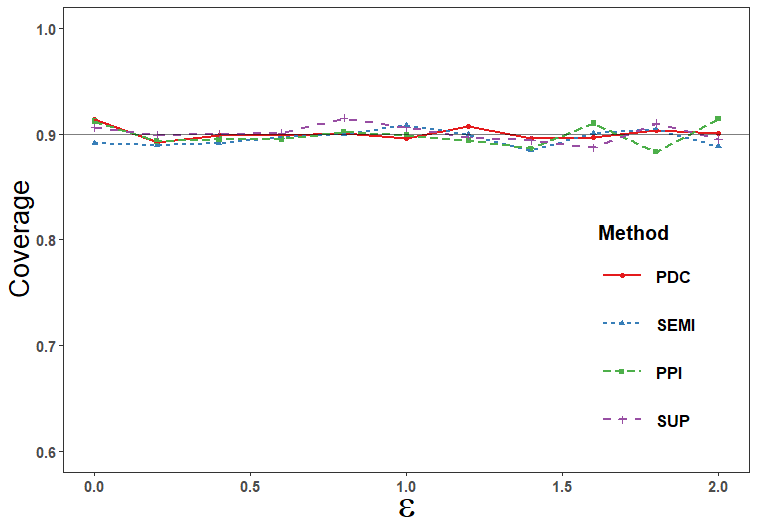}
		\label{fig:mean_cov}}\quad
	\subfigure[Width ratio]{
		\includegraphics[width=0.45\textwidth]{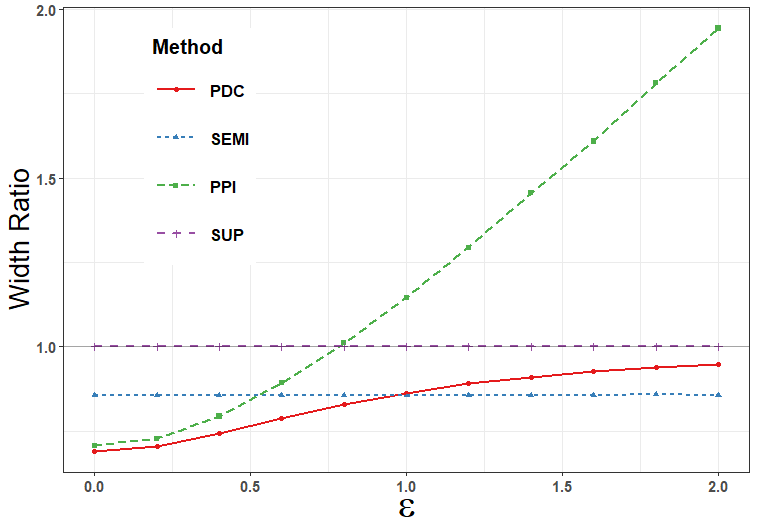}
		\label{fig:mean_wid}
	}
	\caption{\footnotesize Simulation results under Setting \ref{setting 1}.  Figure \ref{fig:mean_cov} presents the coverage probability of CI constructed by four methods versus noise level $\epsilon$ of the predictive model $\mu(\cdot)$. Figure \ref{fig:mean_wid} presents the ratio of the width of different CI  to the width of the CI constructed by $\hat{\theta}_{sup}$, versus the noise level $\epsilon$.}
	\label{fig:mean estimation vs epsilon}
\end{figure}

\begin{figure}[htbp]
	\centering
	
	\subfigure[Coverage probability for $\E \text{Y}$]{
		\includegraphics[width=0.45\textwidth]{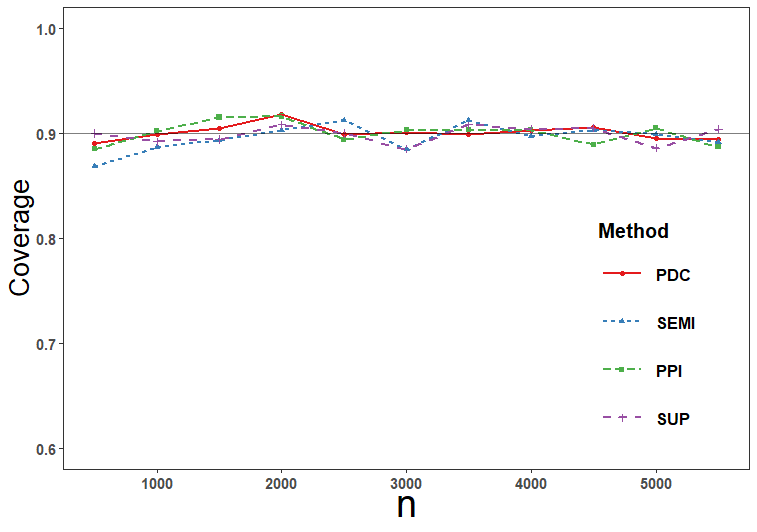}
		\label{fig:mean_cov vs n}}\quad
	\subfigure[Width ratio]{
		\includegraphics[width=0.45\textwidth]{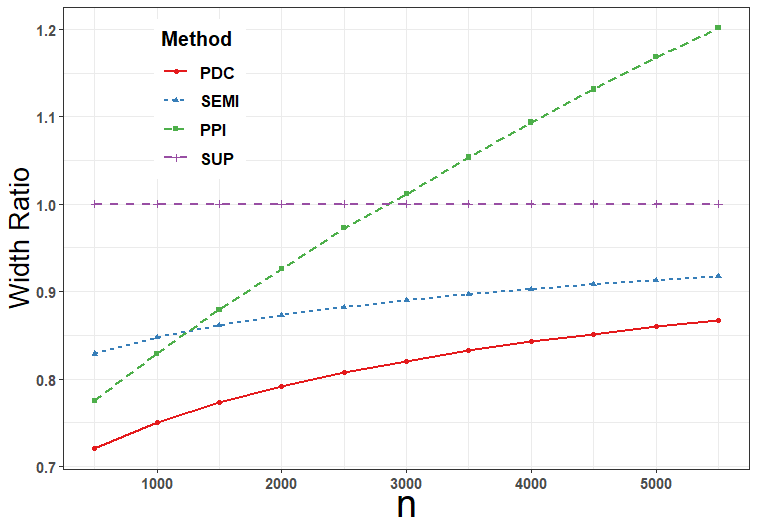}
		\label{fig:mean_wid vs n}
	}
	\caption{\footnotesize Simulation results under Setting \ref{setting 2}.  Figure \ref{fig:mean_cov vs n} presents the coverage probability of CI constructed by four methods versus labeled sample size $n$. Figure \ref{fig:mean_wid vs n} presents ratio of the width of different CI  to the width of the CI constructed by $\hat{\theta}_{sup}$, versus labeled sample size $n$.}
	\label{fig:mean estimation vs n}
\end{figure}

\begin{figure}[htbp]
	\centering
	
	\subfigure[Coverage probability for $\E \text{Y}$]{
		\includegraphics[width=0.45\textwidth]{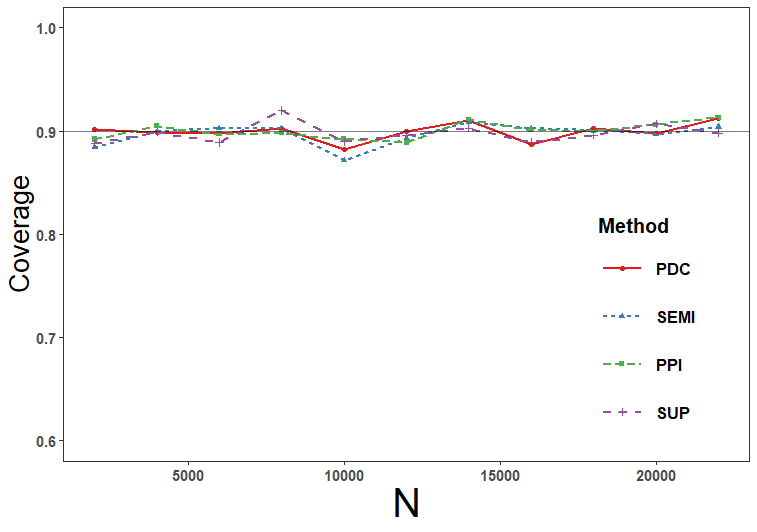}
		\label{fig:mean_cov vs N}}\quad
	\subfigure[Width ratio]{
		\includegraphics[width=0.45\textwidth]{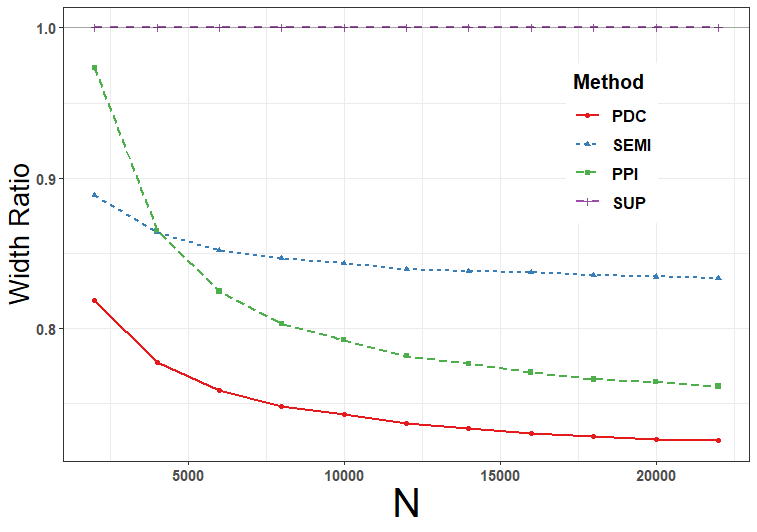}
		\label{fig:mean_wid vs N}
	}
	\caption{\footnotesize Simulation results under Setting \ref{setting 3}.  Figure \ref{fig:mean_cov vs N} presents the coverage probability of CI constructed by four methods versus unlabeled sample size $N$. Figure \ref{fig:mean_wid vs N} presents ratio of the width of different CI  to the width of the CI constructed by $\hat{\theta}_{sup}$, versus the unlabelled sample size $N$.}
	\label{fig:mean estimation vs N}
\end{figure}

Figures \ref{fig:mean estimation vs epsilon}-\ref{fig:mean estimation vs N} summarizes the simulation results under  Settings \ref{setting 1}-\ref{setting 3}, respectively.   Figures \ref{fig:mean_cov}, \ref{fig:mean_cov vs n} and \ref{fig:mean_cov vs N}  show that all methods guarantee the nominal $(1-\alpha)$ coverage rate. For the width, Figure \ref{fig:mean_wid} demonstrates that the performance of PPI  deteriorates rapidly with the increase of noise level and can be substantially worse than the supervised counterpart. In contrast, the performance of PDC gets worse slowly and it consistently remains superior to the supervised counterpart. Both the PPI and PDC outperform the semi-supervised method when the noise level is low, whereas both methods perform worse than the semi-supervised method when the noise level is high. Figure \ref{fig:mean_wid vs n} demonstrates that PPI outperforms the supervised method when $n$ is relatively small compared to $N$. However, as $n$ increases to the same order as $N$, the PPI's performance deteriorates and becomes inferior to the supervised method. Conversely, PDC consistently maintains better performance than the supervised method, regardless of the value of $n$.  Figure \ref{fig:mean_wid vs N} further supports the results of Figure \ref{fig:mean_wid vs n}. When $N$ is of the same order as $n$, the performance of PPI is inferior to that of the semi-supervised method. However, as $N$ gradually increases to be much larger than $n$, the PPI quickly outperforms the semi-supervised method. The PDC consistently outperforms other methods regardless of the value of $N$. 

\subsubsection{Linear regression}
We consider the following three settings in the linear regression scenario. In setting \ref{setting 4} and setting \ref{setting 5}, we investigate the influence of the coefficient $\bm{\beta}_{(1)}$ and use two different predictive models, corresponding to poor and good models, respectively. In setting \ref{setting 6}, we explore the performance of the PDC procedure after combining two models as (\ref{equ:combine}). In all settings, $\bm{\beta}_{(1)}$ takes value from the set $\{0,1,\dots,10\}$ successively.

\begin{enumerate}[\text{Setting} 1]
\setcounter{enumi}{3} 
    \item \label{setting 4}The predictive model is set as $\mu(\bm{x})=\bm{\beta}_{-(1)}^\top\bm{x}_{-(1)}+\bm{\beta}_{-(1)}^\top\bm{x}_{-(1)}^2$. 
    \item \label{setting 5}The predictive model is set as $\mu(\bm{x})=\bm{\beta}^\top\bm{x}^2$.
    \item \label{setting 6}Two random forest models \citep{breiman2001random} are trained from two datasets with distributions different from the target population. The two datasets are generated as follows:
    \begin{enumerate}[\text{Dataset} 1]
        \item $\mathbf{X}\sim \N(0,\bm{I})$, $\text{Y}=\bm{\beta}^\top\mathbf{X}+\bm{\beta}^\top\mathbf{X}^3+\bm{\beta}^\top\exp{(\mathbf{X})}+\zeta_1$;
        \item $\mathbf{X}\sim \N(0,\bm{I})$, $\text{Y}=\bm{\beta}^\top\mathbf{X}^2+\zeta_2$,
    \end{enumerate}
    where $\zeta_1\sim\text{LN}(0,1)$, $\zeta_2\sim t_3$, $\text{LN}$ and $t$ represent the log-normal distribution and  $t$ distribution respectively.
\end{enumerate}

\begin{figure}[htbp]
	\centering
	
	\subfigure[Coverage probability for $\bm{\theta}_{(1)}$]{
		\includegraphics[width=0.45\textwidth]{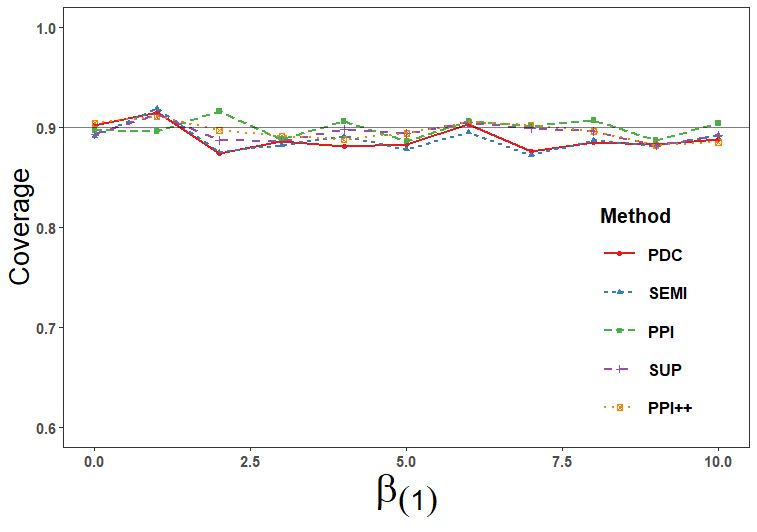}
		\label{fig:linear cov bad}}\quad
	\subfigure[Width ratio]{
		\includegraphics[width=0.45\textwidth]{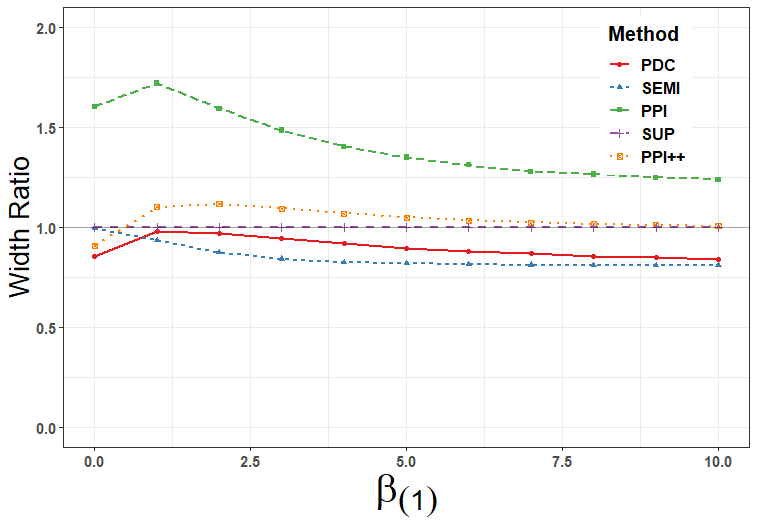}
		\label{fig:linear wid bad}
	}
	\caption{\footnotesize Simulation results under Setting \ref{setting 4}.  Figure \ref{fig:linear cov bad} presents the coverage probability of CI constructed using various methods versus the coefficient $\bm{\beta}_{(1)}$. Figure \ref{fig:linear wid bad} presents the ratio of the widths of different CI constructed using various methods to the width of the CI constructed by $\hat{\theta}_{sup}$, versus the coefficient $\bm{\beta}_{(1)}$.}
	\label{fig:linear bad model}
\end{figure}

\begin{figure}[htbp]
	\centering
	
	\subfigure[Coverage probability for $\bm{\theta}_{(1)}$]{
		\includegraphics[width=0.45\textwidth]{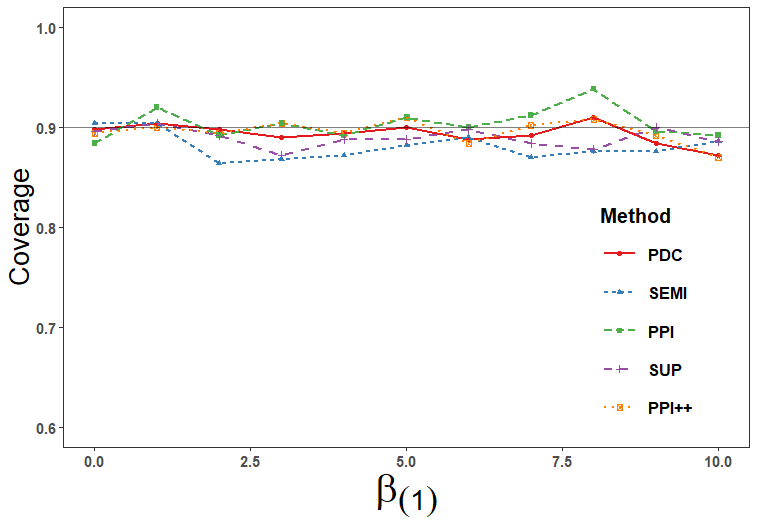}
		\label{fig:linear cov good}}\quad
	\subfigure[Width ratio]{
		\includegraphics[width=0.45\textwidth]{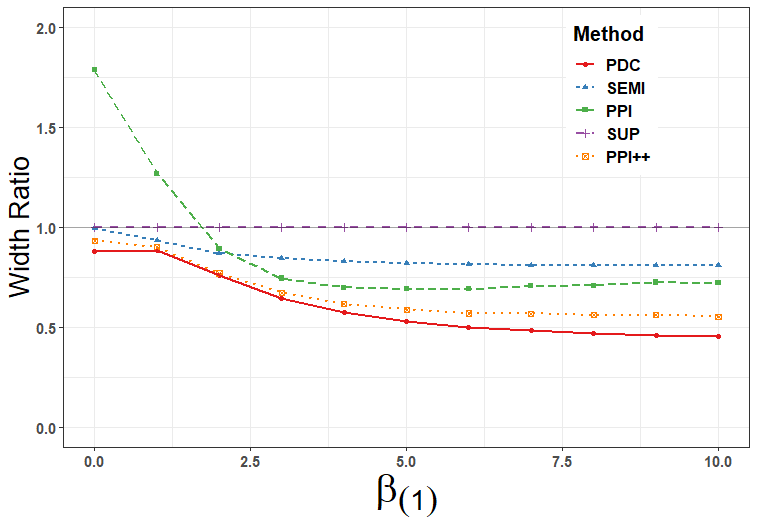}
		\label{fig:linear wid good}
	}
	\caption{\footnotesize Simulation results under Setting \ref{setting 5}.  Figure \ref{fig:linear cov good} presents the coverage probability of CI constructed using various methods versus the coefficient $\bm{\beta}_{(1)}$. Figure \ref{fig:linear wid good} presents the ratio of the widths of different CI constructed using various methods to the width of the CI constructed by $\hat{\theta}_{sup}$, versus the coefficient $\bm{\beta}_{(1)}$.}
	\label{fig:linear good model}
\end{figure}

\begin{figure}[htbp]
	\centering
	
	\subfigure[Coverage probability for $\bm{\theta}_{(1)}$]{
		\includegraphics[width=0.45\textwidth]{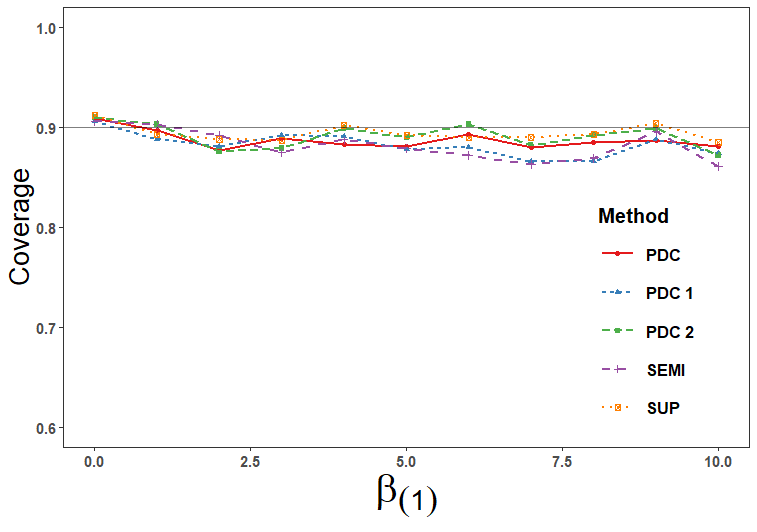}
		\label{fig:two model cov}}\quad
	\subfigure[Width ratio]{
		\includegraphics[width=0.45\textwidth]{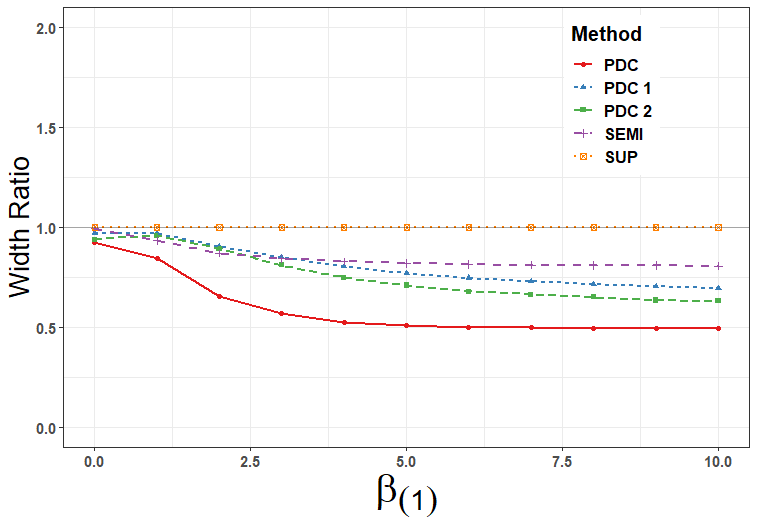}
		\label{fig:two model wid}
	}
	\caption{\footnotesize Simulation results under Setting \ref{setting 6}.  Figure \ref{fig:two model cov} presents the coverage probability of CI constructed using various methods versus the coefficient $\bm{\beta}_{(1)}$. Figure \ref{fig:two model wid} presents the ratio of the widths of different CI constructed using various methods to the width of the CI constructed by $\hat{\theta}_{sup}$, versus the coefficient $\bm{\beta}_{(1)}$. `PDC 1' and `PDC 2' refer to the PDC method utilizing random forest models trained from Dataset 1 and Dataset 2, respectively. `PDC' refers to the PDC method utilizing two random forest models.}
	\label{fig:two model pdc}
\end{figure}

Figures \ref{fig:linear bad model}-\ref{fig:two model pdc} present the simulation results under the Settings \ref{setting 4}-\ref{setting 6}.  Figures \ref{fig:linear cov bad}, \ref{fig:linear cov good}  and \ref{fig:two model cov} show that all methods guarantee the nominal $(1-\alpha)$ coverage rate. When the predictive model is relatively poor, Figure \ref{fig:linear wid bad}  shows that the PPI method is worse than the supervised method for all values of $\bm{\beta}_{(1)}$. The PPI++ method outperforms the supervised method only when $\bm{\beta}_{(1)}$ is very small, which is also the case where the predictive model performs relatively well. However, as $\bm{\beta}_{(1)}$ increases, PPI++ quickly becomes inferior to the supervised method.  Both the semi-supervised method and PDC outperform the supervised method.  Figure \ref{fig:linear wid bad}  also implies that when the predictive model is poor, both the PPI and  PPI++, as well as the PDC, can be inferior to the semi-supervised method.  When the predictive model is relatively good, Figure \ref{fig:linear wid good}  shows that all the methods can outperform the supervised counterpart when $\bm{\beta}_{(1)}$ is relatively large, and particularly, the PDC outperforms all the other methods regardless of the value of $\bm{\beta}_{(1)}$. Figure \ref{fig:two model wid} shows that combining two predictive models can substantially  reduce the variance of the PDC estimator, and consequently, narrow the width of the CI constructed by the PDC.

\subsection{Real data analysis}\label{subsec:real data}
In this section, we consider an application of our proposed method to the Los Angeles homeless dataset \citep{kriegler2010small}. In 2004-2005, the Los Angeles Homeless Services Authority conducted a study of the homeless population. A stratified spatial sampling was used on the 2054 census tracts of Los Angeles County. First, 244 tracts that were believed to have a  large amount of homeless were preselected and visited with probability 1. 
These tracts are known as "hot tracts." Next, 265 of the rest non-hot tracts were randomly selected and visited (as $\mathcal{D}_L$), leaving 1545 non-hot tracts unvisited (as $\mathcal{D}_U$). We include seven predictors recommended by \cite{kriegler2010small} in our study, \texttt{Perc.Industrial}, \texttt{Perc.Residential}, \texttt{Perc.Vacant}, \texttt{Perc.Commercial}, \texttt{Perc.OwnerOcc}, \texttt{Perc.Minority}, \texttt{MedianHouseholdIncome}. The response variable is \texttt{Total}. Our goal is to estimate the linear regression coefficient on the population of non-hot tracts. 

Although the data from 244 hot tracts is labeled, it is considered to be biased from the population of non-hot tracts. The previous works usually leave out this part of data and only utilize the data from 1810 non-hot tracts \citep{Zhang2019ssmean,azriel2022semi,song2023general}. To fully leverage the information contained in the dataset, we train a random forest model \citep{breiman2001random} on the data from 244 hot tracts and use this model as the predictive function  $\mu(\cdot)$. The resulting estimators,  along with their estimated standard deviation ($\widehat{\text{SD}}$) and  the ratio of their CI's width to that of the supervised counterpart are given in Table \ref{tab:homeless}. 

Since the predictive model is obtained from a different distribution, the performance of the PPI estimators is poor with much larger standard errors than the supervised counterparts. We owe this to the prediction bias. Both the PDC and PPI++ produce reasonable estimates. We also notice that the PDC estimator always has a smaller standard error than that of the other three methods. Another point worth noting is that, for the estimation of the coefficient of \texttt{Perc.Industrial}, PPI++ produces an estimator with a larger standard error than that of the supervised counterpart. This demonstrates that the PPI++ cannot guarantee the element-wise variance reduction. To see the difference between PPI++ and PDC more fundamentally, we present the numerical results of the estimated $\bm{T}$ and the tuning parameter of PPI++ at the end of the Appendix.

\begin{table}
	\centering
	\caption{\footnotesize Results from applying different methods to the Los Angeles homeless dataset described in Subsection \ref{subsec:real data}. }
 \small
    \tabcolsep 1.5pt
    \begin{threeparttable}
	\begin{tabular}{l|cc|ccc|ccc|ccc}
		\hline
		~&\multicolumn{2}{c}{SUP}&\multicolumn{3}{c}{PDC}&\multicolumn{3}{c}{PPI++}&\multicolumn{3}{c}{PPI}\\
        \hline
        ~&$\hat{\theta}_{sup}$&$\widehat{\text{SD}}$&$\hat{\theta}_{pdc}$&$\widehat{\text{SD}}$&WR&$\hat{\theta}_{ppi++}$&$\widehat{\text{SD}}$&WR&$\hat{\theta}_{ppi}$&$\widehat{\text{SD}}$&WR\\
		\hline
		Intercept&21.964&14.171&16.906&13.633&0.962&20.603&14.023&0.990&48.419& 41.258&2.911\\
		Perc.Industrial&0.028&0.194&-0.018&0.188&0.966&0.032&0.203&1.044&-0.049&  0.621&3.197\\
		Perc.Residential&-0.088&0.097&-0.108&0.096&0.983&-0.098&0.096&0.991&0.115&  0.308&3.169\\
		Perc.Vacant&1.404&0.570&1.995&0.474&0.832&1.478&0.541&0.948&-0.033&1.932&  3.387\\
		Perc.Commercial&0.339&0.362&0.515&0.350&0.965&0.360&0.356&0.983&-0.071&  0.783&2.160\\
		Perc.OwnerOcc&-0.233&0.090&-0.212&0.085&0.940&-0.209&0.088&0.980&-0.703&  0.255&2.830\\
		Perc.Minority&0.059&0.130&0.077&0.125&0.956&0.071&0.129&0.990&-0.186&0.327&  2.504\\
		MedianInc (in \$K)&0.074&0.081&0.079&0.069&0.846&0.066&0.073&0.901&  0.235&0.415&5.104\\
		\hline
	\end{tabular}
 \begin{tablenotes}
     \footnotesize
     \item \textit{Notes}: $\hat{\theta}$, the point estimator; $\widehat{\text{SD}}$, the estimated standard deviation; WR, the ratio of CI‘s width to that of the supervised counterpart.
 \end{tablenotes}
	\end{threeparttable}
	\label{tab:homeless}
\end{table}

\section{Discussion}\label{sec:discussion}
In this article, we propose a novel assumption-lean, model-adaptive approach for post-prediction inference, called Prediction De-Correlated inference. We establish the asymptotic normality of the PDC estimator and compare it with several recent works. The PDC estimator consistently outperforms the supervised estimator  and can be extended to accommodate multiple predictive models. Both numerical results and real-world data analysis back up our theoretical results. 

This article inspires several directions worthy of further investigation. First, the optimal approach to combine multiple predictive models is unclear. Second, the PDC procedure can only be applied when the dimension of inference target is fixed. Finding a safe alternative under high-dimensional post-prediction settings is of interest and we leave it for future work.

\section*{Acknowledgements}
The authors are indebted to the Editor, the Associate Editor and two referees for their constructive comments and suggestions that led to a significant improvement of this paper.

\bibliography{ref.bib}

\begin{thebibliography}{23}
\newcommand{\enquote}[1]{``#1''}
\expandafter\ifx\csname natexlab\endcsname\relax\def\natexlab#1{#1}\fi

\bibitem[{Angelopoulos et~al.(2023{\natexlab{a}})Angelopoulos, Bates, Fannjiang, Jordan, and Zrnic}]{angelopoulos2023prediction}
Angelopoulos, A.~N., Bates, S., Fannjiang, C., Jordan, M.~I., and Zrnic, T. (2023{\natexlab{a}}), \enquote{Prediction-powered inference,} \textit{Science}, 382, 669--674.

\bibitem[{Angelopoulos et~al.(2023{\natexlab{b}})Angelopoulos, Duchi, and Zrnic}]{angelopoulos2023ppi++}
Angelopoulos, A.~N., Duchi, J.~C., and Zrnic, T. (2023{\natexlab{b}}), \enquote{PPI++: Efficient Prediction-Powered Inference,} \textit{arXiv preprint arXiv:2311.01453}.

\bibitem[{Azriel et~al.(2022)Azriel, Brown, Sklar, Berk, Buja, and Zhao}]{azriel2022semi}
Azriel, D., Brown, L.~D., Sklar, M., Berk, R., Buja, A., and Zhao, L. (2022), \enquote{Semi-supervised linear regression,} \textit{Journal of the American Statistical Association}, 117, 2238--2251.

\bibitem[{Breiman(2001)}]{breiman2001random}
Breiman, L. (2001), \enquote{Random forests,} \textit{Machine Learning}, 45, 5--32.

\bibitem[{Cai et~al.(2022)Cai, Li, and Liu}]{cai2022semitri}
Cai, T., Li, M., and Liu, M. (2022), \enquote{Semi-supervised Triply Robust Inductive Transfer Learning,} \textit{arXiv preprint arXiv:2209.04977}.

\bibitem[{Chakrabortty and Cai(2018)}]{chakrabortty2018}
Chakrabortty, A. and Cai, T. (2018), \enquote{{Efficient and adaptive linear regression in semi-supervised settings},} \textit{The Annals of Statistics}, 46, 1541 -- 1572.

\bibitem[{Chakrabortty et~al.(2022)Chakrabortty, Dai, and Carroll}]{chakrabortty2022semi}
Chakrabortty, A., Dai, G., and Carroll, R.~J. (2022), \enquote{Semi-Supervised Quantile Estimation: Robust and Efficient Inference in High Dimensional Settings,} \textit{arXiv preprint arXiv:2201.10208}.

\bibitem[{Chapelle et~al.(2009)Chapelle, Schölkopf, and Zien}]{Chapelle2009}
Chapelle, O., Schölkopf, B., and Zien, A. (2009), \textit{Semi-supervised Learning}, MIT Press.

\bibitem[{Chernozhukov et~al.(2018)Chernozhukov, Chetverikov, Demirer, Duflo, Hansen, Newey, and Robins}]{cherno2018}
Chernozhukov, V., Chetverikov, D., Demirer, M., Duflo, E., Hansen, C., Newey, W., and Robins, J. (2018), \enquote{{Double/debiased machine learning for treatment and structural parameters},} \textit{The Econometrics Journal}, 21, C1--C68.

\bibitem[{Hou et~al.(2023)Hou, Guo, and Cai}]{hou2023surrogate}
Hou, J., Guo, Z., and Cai, T. (2023), \enquote{Surrogate assisted semi-supervised inference for high dimensional risk prediction,} \textit{Journal of Machine Learning Research}, 24, 1--58.

\bibitem[{Khoury et~al.(1999)Khoury, Massad, and Fardous}]{khoury1999mortality}
Khoury, S., Massad, D., and Fardous, T. (1999), \enquote{Mortality and causes of death in Jordan 1995-96: assessment by verbal autopsy.} \textit{Bulletin of the World Health Organization}, 77, 641.

\bibitem[{Kim et~al.(2024)Kim, Wasserman, Balakrishnan, and Neykov}]{kim2024semi}
Kim, I., Wasserman, L., Balakrishnan, S., and Neykov, M. (2024), \enquote{Semi-Supervised {U}-statistics,} \textit{arXiv preprint arXiv:2402.18921}.

\bibitem[{Kriegler and Berk(2010)}]{kriegler2010small}
Kriegler, B. and Berk, R. (2010), \enquote{Small area estimation of the homeless in Los Angeles: An application of cost-sensitive stochastic gradient boosting,} \textit{The Annals of Applied Statistics}, 4, 1234--1255.

\bibitem[{Liu et~al.(2017)Liu, Erlich, and Pickrell}]{liu2017case}
Liu, J.~Z., Erlich, Y., and Pickrell, J.~K. (2017), \enquote{Case--control association mapping by proxy using family history of disease,} \textit{Nature genetics}, 49, 325--331.

\bibitem[{Miao et~al.(2023)Miao, Miao, Wu, Zhao, and Lu}]{miao2023assumption}
Miao, J., Miao, X., Wu, Y., Zhao, J., and Lu, Q. (2023), \enquote{Assumption-lean and Data-adaptive Post-Prediction Inference,} \textit{arXiv preprint arXiv:2311.14220}.

\bibitem[{Michaelson et~al.(2009)Michaelson, Loguercio, and Beyer}]{michaelson2009detection}
Michaelson, J.~J., Loguercio, S., and Beyer, A. (2009), \enquote{Detection and interpretation of expression quantitative trait loci ({eQTL}),} \textit{Methods}, 48, 265--276.

\bibitem[{Schmutz et~al.(2023)Schmutz, Humbert, and Mattei}]{schmutz2023dont}
Schmutz, H., Humbert, O., and Mattei, P.-A. (2023), \enquote{Don{\textquoteright}t fear the unlabelled: safe semi-supervised learning via debiasing,} in \textit{The Eleventh International Conference on Learning Representations}.

\bibitem[{Song et~al.(2023)Song, Lin, and Zhou}]{song2023general}
Song, S., Lin, Y., and Zhou, Y. (2023), \enquote{A General M-estimation Theory in Semi-Supervised Framework,} \textit{Journal of the American Statistical Association}, 1--11.

\bibitem[{Van~der Vaart(1998)}]{vaart_1998}
Van~der Vaart, A.~W. (1998), \textit{Asymptotic Statistics}, Cambridge Series in Statistical and Probabilistic Mathematics, Cambridge University Press.

\bibitem[{Wang et~al.(2020)Wang, McCormick, and Leek}]{wang2020methods}
Wang, S., McCormick, T.~H., and Leek, J.~T. (2020), \enquote{Methods for correcting inference based on outcomes predicted by machine learning,} \textit{Proceedings of the National Academy of Sciences}, 117, 30266--30275.

\bibitem[{Zhang et~al.(2019)Zhang, Brown, and Cai}]{Zhang2019ssmean}
Zhang, A., Brown, L.~D., and Cai, T.~T. (2019), \enquote{{Semi-supervised inference: General theory and estimation of means},} \textit{The Annals of Statistics}, 47, 2538 -- 2566.

\bibitem[{Zhang and Bradic(2022)}]{zhang2022high}
Zhang, Y. and Bradic, J. (2022), \enquote{High-dimensional semi-supervised learning: in search of optimal inference of the mean,} \textit{Biometrika}, 109, 387--403.

\bibitem[{Zrnic and Cand{\`e}s(2024)}]{zrnic2024cross}
Zrnic, T. and Cand{\`e}s, E.~J. (2024), \enquote{Cross-prediction-powered inference,} \textit{Proceedings of the National Academy of Sciences}, 121, e2322083121.

\end{thebibliography}
\bibliographystyle{asa}

\appendix\label{sec:appendix}
\section*{Appendix}

We define the empirical notation for any measurable  function $g(\cdot)$, $\mathbb{G}_n (g)=\sqrt{n}\{\E_n (g)-\E (g)\}$, $\mathbb{G}_N (g)=\sqrt{N}\{\E_N (g)-\E (g)\}$.

\begin{proof}[Proof of Theorem \ref{thm:one-step pdc}]
    \begin{equation}
		\begin{split}
			\sqrt{n}\widehat{\bm{H}}(\hat{\theta}_{pdc,1}-\theta^*)&=\sqrt{n}\widehat{\bm{H}}(\hat{\theta}_0-\theta^*)-\sqrt{n}\E_n(\bm{s}_{\hat{\theta}_0}-\bm{s}_{\theta^*})\\
			&\quad-\sqrt{n}\Big[\E_n\bm{s}_{\theta^*}+\gamma\bm{T}^*(\theta^*)\{\E_n (\bm{f}_{\theta^*})-\E_N (\bm{f}_{\theta^*})\}\Big]\\
			&\quad-\sqrt{n}\gamma\bm{T}^*(\theta^*)\big\{\E_n (\bm{f}_{\hat{\theta}_0}-\bm{f}_{\theta^*})-\E_N(\bm{f}_{\hat{\theta}_0}-\bm{f}_{\theta^*})\big\}\\
			&\quad-\sqrt{n}\gamma\{\widehat{\bm{T}}(\hat{\theta}_0)-\bm{T}^*(\theta^*)\}\{\E_n (\bm{f}_{\hat{\theta}_0})-\E_N (\bm{f}_{\hat{\theta}_0})\}\\
			&:=\RN{1}-\RN{2}-\RN{3}-\RN{4}.
		\end{split}\label{equ:one-step decompose}
	\end{equation}
	
	By Assumption \ref{asmp:smooth score} and Example 19.7 in \cite{vaart_1998}, we know function class $\{\bm{s}(\cdot,\cdot,\theta):\theta\in\Theta\}$ is a $\Pr_{\text{Y},\mathbf{X}}$-Donsker class.  Next, Lemma 19.24 of \cite{vaart_1998} implies
	\begin{gather}\label{equ:by lem 19.24}
		\mathbb{G}_n(\bm{s}_{\hat{\theta}_0}-\bm{s}_{\theta^*})=o_p(1).
	\end{gather}
	Hence, 
	\begin{align}\label{equ:E_n s_0-s^*}
		\E_n(\bm{s}_{\hat{\theta}_{0}}-\bm{s}_{\theta^*})=\E(\bm{s}_{\hat{\theta}_{0}}-\bm{s}_{\theta^*})+o_p(n^{-1/2}).
	\end{align}
	
	Thus we can write $\RN{1}$ as
	\begin{equation}
		\begin{split}
			\RN{1}&=\sqrt{n}\widehat{\bm{H}}(\hat{\theta}_0-\theta^*)-\sqrt{n}\E(\bm{s}_{\hat{\theta}_{0}}-\bm{s}_{\theta^*})+o_p(1)\\
			&=\sqrt{n}\widehat{\bm{H}}(\hat{\theta}_0-\theta^*)-\sqrt{n}\bm{H}(\hat{\theta}_0-\theta^*)+\sqrt{n}o_p(\Vert\hat{\theta}_0-\theta^*\Vert)+o_p(1)\\
			&=o_p(1),
		\end{split}\label{equ:I=o_p(1)}
	\end{equation}
	where the second line uses Taylor expansion, the last line follows from Assumption \ref{asmp:consist init}.
	
	For $\RN{3}$, following the same step as deriving \eqref{equ:by lem 19.24}, we have
 \begin{small}
	\begin{equation}
		\begin{split}
			\E_n(\bm{f}_{\hat{\theta}_{0}}-\bm{f}_{\theta^*})=\E(\bm{f}_{\hat{\theta}_{0}}-\bm{f}_{\theta^*})+n^{-1/2}\mathbb{G}_n(\bm{f}_{\hat{\theta}_{0}}-\bm{f}_{\theta^*})=\E(\bm{f}_{\hat{\theta}_{0}}-\bm{f}_{\theta^*})+o_p(n^{-1/2}),\\
			\E_N(\bm{f}_{\hat{\theta}_{0}}-\bm{f}_{\theta^*})=\E(\bm{f}_{\hat{\theta}_{0}}-\bm{f}_{\theta^*})+N^{-1/2}\mathbb{G}_N(\bm{f}_{\hat{\theta}_{0}}-\bm{f}_{\theta^*})=\E(\bm{f}_{\hat{\theta}_{0}}-\bm{f}_{\theta^*})+o_p(N^{-1/2}).\label{equ:E_n f_0-f^*}
		\end{split}
	\end{equation}
 \end{small}
	Therefore,
	\begin{align}\label{equ:III=o_p(1)}
		\Vert \RN{3}\Vert&\le \gamma\Vert \bm{T}^*(\theta^*) \Vert \Big\Vert\mathbb{G}_n(\bm{f}_{\hat{\theta}_0}-\bm{f}_{\theta^*})-\sqrt{\frac{n}{N}}\mathbb{G}_N(\bm{f}_{\hat{\theta}_0}-\bm{f}_{\theta^*})\Big\Vert=o_p(1).
	\end{align}
	
	For $\RN{4}$, we decompose $\Vert \RN{4}\Vert$ as two terms,
	\begin{gather}\label{equ:IV decompose}
			\Vert \RN{4} \Vert\le\sqrt{n}\gamma\Vert\widehat{\bm{T}}(\hat{\theta}_0)-\bm{T}^*(\theta^*)\Vert\Big\Vert\E_n (\bm{f}_{\hat{\theta}_0}-\bm{f}_{\theta^*})-\E_N(\bm{f}_{\hat{\theta}_0}-\bm{f}_{\theta^*})\Big\Vert\\
			\quad+\sqrt{n}\gamma\Vert\widehat{\bm{T}}(\hat{\theta}_0)-\bm{T}^*(\theta^*)\Vert\Big\Vert\E_n (f_{\theta^*})-\E_N (f_{\theta^*})\Big\Vert.\notag
	\end{gather}
	We consider $\Vert\widehat{\bm{T}}(\hat{\theta}_0)-\bm{T}^*(\theta^*)\Vert$ first. We have
	\begin{equation}\label{equ:T diff}
		\Vert\widehat{\bm{T}}(\hat{\theta}_0)-\bm{T}^*(\theta^*)\Vert\le \Vert \widehat{\bm{T}}(\hat{\theta}_0)-\hat{\bm{T}}(\theta^*)\Vert+\Vert\widehat{\bm{T}}(\theta^*)-\bm{T}^*(\theta^*)\Vert.
	\end{equation}
	Since $d$ is fixed, the law of large numbers implies the second part on the right-hand side is a small term, that is 
	\begin{equation}\label{equ:T diff 2th term}
		\Vert\widehat{\bm{T}}(\theta^*)-\bm{T}^*(\theta^*)\Vert=o_p(1).
	\end{equation}
	For the first part, 
	\begin{align}\label{equ:T diff 1st term}
		\Vert \widehat{\bm{T}}(\hat{\theta}_0)-\widehat{\bm{T}}(\theta^*)\Vert&\le \Vert\cov_n(\bm{s}_{\hat{\theta}_0},\bm{f}_{\hat{\theta}_0})[\{\cov_n(\bm{f}_{\hat{\theta}_0})\}^{-1}-\{\cov_n(\bm{f}_{\theta^*})\}^{-1}]\Vert\notag\\
		&\quad+\Vert[\cov_n(\bm{s}_{\hat{\theta}_0},\bm{f}_{\hat{\theta}_0})-\cov_n(\bm{s}_{\theta^*},\bm{f}_{\theta^*})]\{\cov_n(\bm{f}_{\theta^*})\}^{-1}\Vert.
	\end{align}
	Observe that 
	\begin{align*}
		\cov_n(\bm{s}_{\hat{\theta}_0},\bm{f}_{\hat{\theta}_0})-\cov_n(\bm{s}_{\theta^*},\bm{f}_{\theta^*})&=\E_n(\bm{s}_{\hat{\theta}_0}\bm{f}_{\hat{\theta}_0}^{\top})-\E_n(\bm{s}_{\theta^*}\bm{f}_{\theta^*}^{\top})\\
		&\quad-[\E_n(\bm{s}_{\hat{\theta}_0})\{\E_n(\bm{f}_{\hat{\theta}_0})\}^{\top}-\E_n(\bm{s}_{\theta^*})\{\E_n(\bm{f}_{\theta^*})\}^{\top}]\\
		&=\bm{C}_1-\bm{C}_2,
	\end{align*}
	For $\bm{C}_1$, we decompose it as
	\begin{align}\label{equ:C_1 decompose}
		\bm{C}_1&=\E_n\{(\bm{s}_{\hat{\theta}_0}-\bm{s}_{\theta^*})\bm{f}_{\theta^*}^{\top}\}+\E_n\{(\bm{s}_{\hat{\theta}_0}-\bm{s}_{\theta^*})(\bm{f}_{\hat{\theta}_0}-\bm{f}_{\theta^*})^{\top}\}+\E_n\{\bm{s}_{\theta^*}(\bm{f}_{\hat{\theta}_0}-\bm{f}_{\theta^*})^{\top}\}\notag\\
		&=\bm{C}_{11}+\bm{C}_{12}+\bm{C}_{13}.
	\end{align}
	By Assumptions \ref{asmp:score is L2 bounded}-\ref{asmp:smooth score}, we have  
	\begin{gather}\label{equ:C_11 and C_13}
		\Vert\bm{C}_{11}\Vert\le\E_n(\Vert\bm{s}_{\hat{\theta}_0}-\bm{s}_{\theta^*}\Vert\cdot\Vert\bm{f}_{\theta^*}\Vert)\le \Vert\hat{\theta}_0-\theta^*\Vert\E_n(\dot{s}\cdot\Vert\bm{f}_{\theta^*}\Vert)=O_p(n^{-1/2}),\\
		\Vert\bm{C}_{13}\Vert\le\E_n(\Vert\bm{f}_{\hat{\theta}_0}-\bm{f}_{\theta^*}\Vert\cdot\Vert\bm{s}_{\theta^*}\Vert)\le \Vert\hat{\theta}_0-\theta^*\Vert\E_n(\dot{f}\cdot\Vert\bm{s}_{\theta^*}\Vert)=O_p(n^{-1/2}),\\
		\Vert\bm{C}_{12}\vert\le\E_n(\Vert\bm{s}_{\hat{\theta}_0}-\bm{s}_{\theta^*}\Vert\Vert\bm{f}_{\hat{\theta}_0}-\bm{f}_{\theta^*}\Vert)\le\E_n(\dot{s}\cdot\dot{f})\Vert\hat{\theta}_0-\theta^*\Vert^2=O_p(n^{-1}).
	\end{gather}
	On the other hand, \eqref{equ:E_n s_0-s^*} and \eqref{equ:E_n f_0-f^*}  together imply that
	\begin{align}\label{equ:C_2}
		\bm{C}_{2}=\{\E_n(\bm{s}_{\theta^*})+o_p(1)\}\{\E(\bm{f}_{\theta^*})+o_p(1)\}^{\top}-\E_n(\bm{s}_{\theta^*})\{\E_n(\bm{f}_{\theta^*})\}^{\top}=o_p(1).
	\end{align}
	Combining \eqref{equ:C_1 decompose}-\eqref{equ:C_2} yields 
	\begin{equation}\label{equ:covn0-covn^*}
		\cov_n(\bm{s}_{\hat{\theta}_0},\bm{f}_{\hat{\theta}_0})-\cov_n(\bm{s}_{\theta^*},\bm{f}_{\theta^*})=o_p(1).
	\end{equation}
	Since $d$ is fixed, by law of large numbers we have $\cov_n(\bm{s}_{\theta^*},\bm{f}_{\theta^*})\overset{p}{\rightarrow}\cov(\bm{s}_{\theta^*},\bm{f}_{\theta^*})$. This, combined with \eqref{equ:covn0-covn^*} implies that 
	\begin{equation}\label{equ:cov_n sf_0}
		\cov_n(\bm{s}_{\hat{\theta}_0},\bm{f}_{\hat{\theta}_0})=O_p(1).
	\end{equation}
	Following the same procedure, we can prove 
	\begin{equation}
		\begin{split}
			\cov_n(\bm{f}_{\theta^*})=\cov(\bm{f}_{\theta^*})+o_p(1),\\
			\cov_n(\bm{f}_{\hat{\theta}_0})=\cov(\bm{f}_{\theta^*})+o_p(1).\label{equ:covff}
		\end{split}
	\end{equation}
	Since the elements of the matrix inverse are continuous functions of the original matrix elements, \eqref{equ:covff} implies that
	\begin{gather}\label{equ:covff inverse}
		\{\cov_n(\bm{f}_{\hat{\theta}_0})\}^{-1}-\{\cov_n(\bm{f}_{\theta^*})\}^{-1}=o_p(1),\\ \{\cov_n(\bm{f}_{\theta^*})\}^{-1}=\{\cov(\bm{f}_{\theta^*})\}^{-1}+o_p(1).
	\end{gather}
	Hence, \eqref{equ:T diff 1st term}, \eqref{equ:covn0-covn^*},\eqref{equ:cov_n sf_0},\eqref{equ:covff inverse} together imply that
	\begin{equation}\label{equ:T diff 1st term o_p}
		\Vert \widehat{\bm{T}}(\hat{\theta}_0)-\widehat{\bm{T}}(\theta^*)\Vert=o_p(1).
	\end{equation}
	Combining \eqref{equ:T diff}, \eqref{equ:T diff 1st term o_p}, and \eqref{equ:T diff 2th term}, we obtain
	\begin{equation}\label{equ:T diff=o_p(1)}
		\Vert\widehat{\bm{T}}(\hat{\theta}_0)-\bm{T}^*(\theta^*)\Vert=o_p(1).
	\end{equation}
	For the rest term in  the decomposition of $\Vert \RN{4}\Vert$, by Assumption \ref{asmp:score is L2 bounded} and Markov's Inequality, 
	\begin{equation}\label{equ:f^* diff}
		\sqrt{n}\Vert\E_n (\bm{f}_{\theta^*})-\E_N (\bm{f}_{\theta^*})\Vert=O_p(1).
	\end{equation}
	This, combined with \eqref{equ:IV decompose}, \eqref{equ:T diff=o_p(1)}, and \eqref{equ:E_n f_0-f^*}, implies that
	\begin{equation}\label{equ:IV=o_p(1)}
		\Vert \RN{4}\Vert=o_p(1).
	\end{equation}
	
	Since $\mathcal{D}_L$ is independent of $\mathcal{D}_U$, by multivariate central limit theorem,
	\begin{align}\label{equ:II asymp normal}
		\RN{2}\overset{D}{\rightarrow}\N_d(\bm{0},\bm{\Gamma}(\gamma)).
	\end{align}
	Combining \eqref{equ:one-step decompose}, \eqref{equ:I=o_p(1)}, \eqref{equ:III=o_p(1)}, \eqref{equ:IV=o_p(1)}, and \eqref{equ:II asymp normal}, we have shown that
	\begin{gather*}
		\sqrt{n}\widehat{\bm{H}}(\hat{\theta}_{pdc,1}-\theta^*)\overset{D}{\rightarrow} \N_d(\bm{0},\bm{\Gamma}(\gamma)).
	\end{gather*}
	Finally, by  Slutsky's theorem and Assumption \ref{asmp:consist init}, we have completed the proof.
\end{proof}

Before we prove Theorem \ref{thm:cross pdc}, we define some notations and two extra assumptions. 
Define
\begin{gather*}
    \widehat{\bm{T}}^{(j)}(\theta)=\cov_{\mathcal{D}_{L}^{(j)}}(\bm{s}_{\theta},\bm{f}_{\theta})\{\cov_{\mathcal{D}_{L}^{(j)}}(\bm{f}_{\theta})\}^{-1},\\
    \bm{u}^{(-j)}=\E\{\hat{\bm{f}}_{\hat{\theta}^{(-j)}}^{(-j)}\vert\hat{\bm{f}}^{(-j)},\hat{\theta}^{(-j)}\}, \bm{v}^{(-j)}=\E\{\bm{f}_{\hat{\theta}^{(-j)}}\vert\hat{\theta}^{(-j)}\}, \bm{w}^{(-j)}=\E\{\bm{s}_{\hat{\theta}^{(-j)}}\vert\hat{\theta}^{(-j)}\},\\
    \bm{U}^{(-j)}=(\hat{\bm{f}}_{\hat{\theta}^{(-j)}}^{(-j)}-\bm{u}^{(-j)})(\hat{\bm{f}}_{\hat{\theta}^{(-j)}}^{(-j)}-\bm{u}^{(-j)})^\top, \bm{V}^{(-j)}=(\bm{f}_{\hat{\theta}^{(-j)}}-\bm{v}^{(-j)})(\bm{f}_{\hat{\theta}^{(-j)}}-\bm{v}^{(-j)})^\top,\\
    \bm{U}_s^{(-j)}=(\bm{s}_{\hat{\theta}^{(-j)}}-\bm{w}^{(-j)})(\hat{\bm{f}}_{\hat{\theta}^{(-j)}}^{(-j)}-\bm{u}^{(-j)})^\top, \bm{V}_s^{(-j)}=(\bm{s}_{\hat{\theta}^{(-j)}}-\bm{w}^{(-j)})(\bm{f}_{\hat{\theta}^{(-j)}}-\bm{v}^{(-j)})^\top.
\end{gather*}

\begin{asmp}\label{asmp:cross T moment}
    For every $j\in\{1,\dots,K\}$, $(k,l)\in \{1,\dots,q\}$, and some $r>1$, $c<\infty$,
    \begin{gather*}
        \E\{\vert\bm{U}_{(kl)}^{(-j)}-\bm{V}_{(kl)}^{(-j)}\vert^r\} < c, \E\{\bm{U}_{(kl)}^{(-j)}-\bm{V}_{(kl)}^{(-j)}\}=o(1),\\
        \E\{\vert\bm{U}_{s,(kl)}^{(-j)}-\bm{V}_{s,(kl)}^{(-j)}\vert^r\} < c, \E\{\bm{U}_{s,(kl)}^{(-j)}-\bm{V}_{s,(kl)}^{(-j)}\}=o(1).
    \end{gather*}
\end{asmp}
\begin{asmp}\label{asmp:cross f moment}
$\E\{\Vert\hat{\bm{f}}_{\hat{\theta}^{(-j)}}^{(-j)}-\bm{f}_{\hat{\theta}^{(-j)}}\Vert^2\}=o(1)$ for any $j\in\{1,\dots,K\}$.    
\end{asmp}
Assumption \ref{asmp:cross T moment} is a moment condition which ensures the consistency of the matrix $\widehat{\bm{T}}^{(j)}_{ss}(\hat{\theta}^{(-j)})$. Assumption \ref{asmp:cross f moment} is a stability condition, and its similar version can be found in \cite{zrnic2024cross} and \cite{kim2024semi}. It is worth pointing out that here we have an additional stochastic term $\hat{\theta}^{(-j)}$ and the expectation is with respect to $(\mathbf{X},\hat{\bm{f}}^{(-j)},\hat{\theta}^{(-j)})$. 
To prove Theorem \ref{thm:cross pdc}, we need the following two useful lemmas.
\begin{lem}\label{lem:conditional O_p(1)}[Lemma S2 in \cite{zhang2022high}]
    Let $\{W_n\}$ be a sequence of random variables, $\{\mathcal{F}_n\}$ be a sequence of $\sigma$-algebra. If $\E(W_n^2\vert \mathcal{F}_n)=O_p(1)$, then $W_n=O_p(1)$. Consequently, if $(Z_{n,i})$ is a row-wise independent and identically distributed triangular array conditional on $\mathcal{F}_n$, with $\var(Z_{n,1}\vert \mathcal{F}_n)=O_p(1)$, or a stronger condition that $\E(Z_{n,1}^2\vert \mathcal{F}_n)=O_p(1)$. Then $\frac{1}{n}\sum_{i=1}^n Z_{n,i}=\E(Z_{n,1})+O_p(n^{-1/2})$.
\end{lem}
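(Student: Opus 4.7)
The plan is to handle the two claims in sequence: the first implication is the substantive piece, and the triangular-array corollary follows by plugging in the right choice of $W_n$ and computing a conditional variance.

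For the first implication, the naive route of taking the unconditional expectation of $\E(W_n^2\mid\mathcal{F}_n)$ and invoking marginal Markov fails because $\E(W_n^2)$ need not be finite (an $O_p(1)$ random variable need not have bounded expectation). Instead, I would use a truncation-and-union-bound argument on a good event. Fix $\epsilon>0$. By the definition of $O_p(1)$ applied to the random sequence $\E(W_n^2\mid\mathcal{F}_n)$, choose $K>0$ such that $P(\E(W_n^2\mid\mathcal{F}_n)>K)<\epsilon$ for all $n$. Let $A_n=\{\E(W_n^2\mid\mathcal{F}_n)\leq K\}$, which is $\mathcal{F}_n$-measurable by construction. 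Then decompose
\[
P(|W_n|>M)\;\leq\;P(\{|W_n|>M\}\cap A_n)+P(A_n^c)\;\leq\;P(\{|W_n|>M\}\cap A_n)+\epsilon.
\]
Because $\mathbf{1}_{A_n}$ is $\mathcal{F}_n$-measurable, conditional Markov's inequality gives
\[
P(\{|W_n|>M\}\cap A_n)=\E\bigl[\mathbf{1}_{A_n}P(|W_n|>M\mid\mathcal{F}_n)\bigr]\leq \E\bigl[\mathbf{1}_{A_n}\tfrac{\E(W_n^2\mid\mathcal{F}_n)}{M^2}\bigr]\leq \frac{K}{M^2}.
\]
Picking $M$ with $K/M^2<\epsilon$ yields $P(|W_n|>M)<2\epsilon$ for all $n$, which is precisely $W_n=O_p(1)$.

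For the triangular-array conclusion, I apply the first part to $W_n=\sqrt{n}\{\bar{Z}_n-\E(Z_{n,1}\mid\mathcal{F}_n)\}$ with $\bar{Z}_n=n^{-1}\sum_{i=1}^n Z_{n,i}$. By the conditional i.i.d. hypothesis,
\[
\E(W_n^2\mid\mathcal{F}_n)=n\cdot\var(\bar{Z}_n\mid\mathcal{F}_n)=\var(Z_{n,1}\mid\mathcal{F}_n),
\]
which is $O_p(1)$ directly, and also under the stronger hypothesis $\E(Z_{n,1}^2\mid\mathcal{F}_n)=O_p(1)$ via $\var\leq\E(\cdot^2)$. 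The first implication then gives $W_n=O_p(1)$, i.e.\ $\bar{Z}_n=\E(Z_{n,1}\mid\mathcal{F}_n)+O_p(n^{-1/2})$. The statement as written, with $\E(Z_{n,1})$ in place of $\E(Z_{n,1}\mid\mathcal{F}_n)$, is how the lemma is invoked downstream in the cross-fitting proofs, where the centering is already the conditional mean of the conditionally i.i.d.\ sample; any further passage to an unconditional mean would need a separate bias-control step and is not part of this lemma.

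The main obstacle is isolated to the first implication: bridging a conditional moment bound to an unconditional in-probability bound cannot be done by taking expectations. The key is to keep the truncation event $A_n$ inside the probability (and crucially $\mathcal{F}_n$-measurable) so that conditional Markov can be applied under the indicator, after which the $\epsilon$ from $P(A_n^c)$ absorbs the remaining mass. Once this is in hand, the second part is a one-line conditional variance calculation exploiting the conditional independence of the array.
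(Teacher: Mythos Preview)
The paper does not supply its own proof of this lemma: it is stated with the attribution ``[Lemma S2 in \cite{zhang2022high}]'' and then used as a black box in the proof of Theorem~\ref{thm:cross pdc}. Your argument is correct and is the standard one for this type of result; the truncation to the $\mathcal{F}_n$-measurable event $A_n=\{\E(W_n^2\mid\mathcal{F}_n)\le K\}$ followed by conditional Markov is exactly the right device, and your derivation of the triangular-array consequence via $W_n=\sqrt{n}\{\bar Z_n-\E(Z_{n,1}\mid\mathcal{F}_n)\}$ is clean. Your closing remark that the natural conclusion carries the \emph{conditional} mean $\E(Z_{n,1}\mid\mathcal{F}_n)$ rather than the unconditional $\E(Z_{n,1})$ is also well taken and matches how the lemma is actually applied later in the appendix.
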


\begin{lem}\label{lem:conditional o_p(1)}
    Let $\{\mathcal{F}_n\}$ be a sequence of $\sigma$-algebra. $\{Z_{n,i}\}$ is a row-wise independent and identically distributed triangular
array conditional on $\sigma$-algebra $\mathcal{F}_n$ with $\E (Z_{n,1})=0$ and $\E\vert Z_{n,1} \vert^r < c $ for some $r>1$ and $c<\infty$. Let $W_n=\frac{1}{n}\sum_{i=1}^n Z_{n,i}$, then $W_n=o_p(1)$.  
\end{lem}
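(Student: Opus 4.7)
The plan is to prove the stronger statement $W_n \to 0$ in $L^r$, i.e., $\E|W_n|^r \to 0$, which immediately yields $W_n = o_p(1)$ via Markov's inequality. The key tool is the von Bahr--Esseen inequality: for independent mean-zero random variables $X_1,\ldots,X_n$ and any $r \in [1,2]$,
\begin{equation*}
\E\Bigl|\sum_{i=1}^n X_i\Bigr|^r \;\leq\; 2\sum_{i=1}^n \E|X_i|^r.
\end{equation*}
Because this inequality is purely probabilistic and does not reference a particular measure, it applies just as well under the conditional law given $\mathcal{F}_n$.

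First I would dispose of the case $r \in (1,2]$. Reading the hypothesis $\E(Z_{n,1}) = 0$ as $\E(Z_{n,1} \mid \mathcal{F}_n) = 0$ a.s.\ (the natural interpretation given the assumed conditional i.i.d.\ structure), applying the conditional form of von Bahr--Esseen to $Z_{n,1},\ldots,Z_{n,n}$ gives
\begin{equation*}
\E\Bigl( \Bigl|\sum_{i=1}^n Z_{n,i}\Bigr|^r \,\Bigm|\, \mathcal{F}_n\Bigr) \;\leq\; 2\sum_{i=1}^n \E\bigl(|Z_{n,i}|^r \bigm| \mathcal{F}_n\bigr) \;=\; 2n\,\E\bigl(|Z_{n,1}|^r \bigm| \mathcal{F}_n\bigr).
\end{equation*}
Taking unconditional expectation and using $\E|Z_{n,1}|^r < c$ produces $\E|\sum_{i} Z_{n,i}|^r \leq 2nc$, whence $\E|W_n|^r \leq 2c/n^{r-1}$. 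Since $r > 1$, this tends to zero and Markov's inequality delivers $W_n = o_p(1)$. For $r > 2$, Lyapunov's inequality gives $\E|Z_{n,1}|^2 \leq (\E|Z_{n,1}|^r)^{2/r} \leq c^{2/r}$, so the moment hypothesis is inherited with exponent $2$ and the same calculation---reducing in that case to a plain variance bound on an independent sum---completes the argument.

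The only subtle point is the interpretation of the mean-zero condition. Because the common conditional law of the $Z_{n,i}$ depends on $\mathcal{F}_n$, a merely unconditional mean-zero assumption would leave a random $\mathcal{F}_n$-measurable bias that need not vanish in the limit, so conditional mean zero is essential. A completely elementary alternative avoiding von Bahr--Esseen is truncation at level $n$: setting $Z'_{n,i} = Z_{n,i}\mathbf{1}\{|Z_{n,i}| \leq n\}$, the probability that any single observation is truncated is at most $nc/n^r = c/n^{r-1} \to 0$, the induced bias is of order $1/n^{r-1}$, and the truncated sum has conditional variance of order $n^{1-r} \vee n^{-1}$; Chebyshev's inequality then reaches the same conclusion. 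The von Bahr--Esseen route is cleaner and handles all $r > 1$ uniformly in a single line.
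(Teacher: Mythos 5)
Your proof is correct but takes a genuinely different route from the paper's. The paper truncates at level $n$ (setting $Z'_{n,i}=Z_{n,i}\mathbf{1}\{|Z_{n,i}|<n\}$), controls the event that some observation is truncated via a union bound plus Markov's inequality on the $r$-th moment, and then bounds the truncated sum by a conditional second-moment (Chebyshev) argument, treating $r\ge 2$ and $1<r<2$ separately. You instead prove the stronger claim $\E|W_n|^r\to 0$ directly, applying the von Bahr--Esseen inequality conditionally on $\mathcal{F}_n$ for $r\in(1,2]$ and reducing $r>2$ to the $r=2$ case by Lyapunov's inequality. Your route is shorter, handles all $r>1$ essentially uniformly, and sidesteps a wrinkle in the paper's computation: after truncation the $Z'_{n,i}$ are no longer conditionally mean-zero, so the step $\E\{|\sum_i Z'_{n,i}|^2\mid\mathcal{F}_n\}=n\E\{|Z'_{n,1}|^2\mid\mathcal{F}_n\}$ silently drops cross-terms that in fact require a separate (small) bias estimate --- exactly the bias you flag in your truncation remark. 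The cost of your approach is reliance on von Bahr--Esseen, which is standard but external, whereas the paper's argument is elementary and self-contained. Your observation that the mean-zero hypothesis must be read as $\E(Z_{n,1}\mid\mathcal{F}_n)=0$ a.s.\ is also well taken: with only an unconditional mean-zero assumption the conclusion fails (e.g.\ take $Z_{n,i}$ degenerate at $\pm 1$ given a fair coin generating $\mathcal{F}_n$), and the paper's own proof implicitly uses the conditional version as well.
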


\begin{proof}[Proof of Lemma \ref{lem:conditional o_p(1)}]
    This lemma can be proved according to a similar strategy to the proof of Lemma S3 in \cite{zhang2022high}. For completeness, we present the proof.  Let $Z_{n,i}'=Z_{n,i}\bm{1}\{\vert Z_{n,i}\vert<n\}$, for any $C>0$, we have
    \begin{gather*}
        \Pr(\vert W_n\vert>C)\le \Pr(\vert\sum_{i=1}^n Z_{n,i}'\vert>nC)+\Pr(\cup_{i=1}^n\{\vert Z_{n,i}\vert\ge n\}).
    \end{gather*}
    By Markov's Inequality, 
    \begin{gather*}
        \Pr(\cup_{i=1}^n\{\vert Z_{n,i}\vert\ge n\})\le n\Pr(\vert Z_{n,1}\vert\ge n)\le n \frac{\E\vert Z_{n,1}\vert^r}{n^r}<cn^{1-r}=o(1).
    \end{gather*}
    If $r\ge 2$, then 
    \begin{gather*}
    \Pr(\vert\sum_{i=1}^n Z_{n,i}'\vert>nC)\le \frac{\E\vert\sum_{i=1}^n Z_{n,i}'\vert^2}{n^2 C^2}=\frac{\E\{\E\{\vert\sum_{i=1}^n Z_{n,i}'\vert^2\vert\mathcal{F}_n\}\}}{n^2C^2}\\
    =\frac{\E\{\E\{\vert Z_{n,1}'\vert^2\vert\mathcal{F}_n\}\}}{nC^2}=\frac{\E\{\vert Z_{n,i}\vert^{1/2} \vert Z_{n,i}\vert^{3/2} \bm{1}\{\vert Z_{n,i}\vert<n\}\}}{nC^2}<\frac{\E \vert Z_{n,i}\vert^{3/2}}{n^{1/2}C^2}=o(1).
    \end{gather*}
    If $1<r<2$, by the similar procedure, we have  
    \begin{equation*}
        \Pr(\vert\sum_{i=1}^n Z_{n,i}'\vert>nC)\le \frac{\E\{\vert Z_{n,i}\vert^{2-r} \vert Z_{n,i}\vert^{r} \bm{1}\{\vert Z_{n,i}\vert<n\}\}}{nC^2}<\frac{\E \vert Z_{n,i}\vert^{r}}{n^{r-1}C^2}=o(1).
    \end{equation*}
    Thus, we have proved that $W_n=o_p(1)$.
\end{proof}

\begin{proof}[Proof of Theorem \ref{thm:cross pdc}]
    Define
    \begin{equation*}
        \hat{\theta}_{cs,1}=\frac{1}{K}\sum_{j=1}^K\hat{\theta}^{(-j)}-\widehat{\bm{H}}^{-1}\hat{S}_{cs},
    \end{equation*}
    where $\hat{S}_{cs}=1/K\sum_{j=1}^K\E_{\mathcal{D}_L^{(j)}}\Big[\bm{s}_{\hat{\theta}^{(-j)}}+\gamma\widehat{\bm{T}}^{(j)}(\hat{\theta}^{(-j)})\big\{\bm{f}_{\hat{\theta}^{(-j)}}-\E_{\mathcal{D}_{U}^{(j)}}(\bm{f}_{\hat{\theta}^{(-j)}})\big\}\Big]$. We first show that 
    \begin{equation}\label{equ:cs asmp dist}
        \sqrt{n}(\hat{\theta}_{cs,1}-\theta^*)\overset{D}{\rightarrow} \N_d(\bm{0},\bm{H}^{-1}\bm{\Gamma}(\gamma)\{\bm{H}^\top\}^{-1}).
    \end{equation}
    As in the proof of Theorem \ref{thm:one-step pdc}, we decompose $\sqrt{n}\widehat{\bm{H}}(\hat{\theta}_{cs,1}-\theta^*)$ as 
    \begin{align*}
        \sqrt{n}\widehat{\bm{H}}(\hat{\theta}_{cs,1}-\theta^*)&=\frac{1}{K}\sum_{j=1}^K\Big[\sqrt{n}\widehat{\bm{H}}(\hat{\theta}^{(-j)}-\theta^*)-\sqrt{n}\E_{\mathcal{D}_L^{(j)}}(\bm{s}_{\hat{\theta}^{(-j)}}-\bm{s}_{\theta^*})\\
        &\quad-\sqrt{n}\big[\E_{\mathcal{D}_L^{(j)}}(\bm{s}_{\theta^*})+\gamma\bm{T}^*(\theta^*)\{\E_{\mathcal{D}_L^{(j)}}(\bm{f}_{\theta^*})-\E_{\mathcal{D}_U^{(j)}}(\bm{f}_{\theta^*})\}\big]\\
        &\quad-\sqrt{n}\gamma\bm{T}^*(\theta^*)\big\{\E_{\mathcal{D}_L^{(j)}}(\bm{f}_{\hat{\theta}^{(-j)}}-\bm{f}_{\theta^*})-\E_{\mathcal{D}_U^{(j)}}(\bm{f}_{\hat{\theta}^{(-j)}}-\bm{f}_{\theta^*})\big\}\\
        &\quad-\sqrt{n}\gamma\big\{\widehat{\bm{T}}^{(j)}(\hat{\theta}^{(-j)})-\bm{T}^*(\theta^*)\big\}\big\{\E_{\mathcal{D}_L^{j}}(\bm{f}_{\hat{\theta}^{(-j)}})-\E_{\mathcal{D}_U^{j}}(\bm{f}_{\hat{\theta}^{(-j)}})\big\} \Big]\\
        &:=\frac{1}{K}\sum_{j=1}^K\{\RN{1}^{(j)}-\RN{2}^{(j)}-\RN{3}^{(j)}-\RN{4}^{(j)}\}.
    \end{align*}
    Since $\hat{\theta}^{(-j)}$ is a $\sqrt{n}$-consistent estimator of $\theta^*$ for every $j\in\{1,\dots,K\}$, following the same procedure as in the proof of Theorem \ref{thm:one-step pdc}, we have
    \begin{equation*}
        \RN{1}^{(j)}=o_p(1),\quad \RN{3}^{(j)}=o_p(1),\quad \RN{4}^{(j)}=o_p(1),\quad \text{for every}\quad  j\in\{1,\dots,K\}. 
    \end{equation*}
    Given that $K$ is fixed, $\frac{1}{K}\sum_{j=1}^K\{\RN{1}^{(j)}-\RN{3}^{(j)}-\RN{4}^{(j)}\}=o_p(1)$. Note that $\frac{1}{K}\sum_{j=1}^K \RN{2}^{(j)}=\RN{2}$, and using (\ref{equ:II asymp normal}), we have proved that (\ref{equ:cs asmp dist}) holds.

    Next, we show that $\hat{S}_{ss}-\hat{S}_{cs}=o_p(n^{-1/2})$, where $\hat{S}_{ss}=1/K\sum_{j=1}^K \hat{S}^{(j)}$. Decompose $\hat{S}_{ss}-\hat{S}_{cs}$ as follows:
    \begin{align*}
        &\hat{S}_{ss}-\hat{S}_{cs}\\
        &=\gamma\frac{1}{K}\sum_{j=1}^K\E_{\mathcal{D}_{L}^{(j)}}\{\widehat{\bm{T}}^{(j)}_{ss}(\hat{\theta}^{(-j)})\hat{\bm{f}}^{(-j)}_{\hat{\theta}^{(-j)}}-\widehat{\bm{T}}^{(j)}(\hat{\theta}^{(-j)})\bm{f}_{\hat{\theta}^{(-j)}}\}\\
        &\quad-\gamma\frac{1}{K}\sum_{j=1}^{K}\E_{\mathcal{D}_{U}^{(j)}}\{\widehat{\bm{T}}^{(j)}_{ss}(\hat{\theta}^{(-j)})\hat{\bm{f}}^{(-j)}_{\hat{\theta}^{(-j)}}-\widehat{\bm{T}}^{(j)}(\hat{\theta}^{(-j)})\bm{f}_{\hat{\theta}^{(-j)}}\}\\
        &=\gamma\frac{1}{K}\sum_{j=1}^K \Big\{\widehat{\bm{T}}^{(j)}(\hat{\theta}^{(-j)})\big\{\E_{\mathcal{D}_{L}^{(j)}}(\hat{\bm{f}}^{(-j)}_{\hat{\theta}^{(-j)}}-\bm{f}_{\hat{\theta}^{(-j)}})-\E_{\mathcal{D}_U^{(j)}}(\hat{\bm{f}}^{(-j)}_{\hat{\theta}^{(-j)}}-\bm{f}_{\hat{\theta}^{(-j)}})\big\}\\
    &\quad+\big\{\widehat{\bm{T}}^{(j)}_{ss}(\hat{\theta}^{(-j)})-\widehat{\bm{T}}^{(j)}(\hat{\theta}^{(-j)})\big\}\big\{\E_{\mathcal{D}_L^{(j)}}(\hat{\bm{f}}_{\hat{\theta}^{(-j)}}^{(-j)})-\E_{\mathcal{D}_U^{(j)}}(\hat{\bm{f}}_{\hat{\theta}^{(-j)}}^{(-j)})\big\}\Big\}\\
    &:=\gamma\frac{1}{K}\sum_{j=1}^K \{M_1^{(j)}+M_2^{(j)}\},
    \end{align*}
We next show that, for any $j\in\{1,\dots,K\}$, $\widehat{\bm{T}}^{(j)}_{ss}(\hat{\theta}^{(-j)})-\widehat{\bm{T}}^{(j)}(\hat{\theta}^{(-j)})=o_p(1)$. Note that 
\begin{align*}
    &\cov_{\mathcal{D}_L^{(j)}}(\hat{\bm{f}}_{\hat{\theta}^{(-j)}}^{(-j)})-\cov_{\mathcal{D}_L^{(j)}}(\bm{f}_{\hat{\theta}^{(-j)}})\\
    &=\E_{\mathcal{D}_{L}^{(j)}}(\bm{U}^{(-j)}-\bm{V}^{(-j)})-\Big\{\E_{\mathcal{D}_{L}^{(j)}}(\hat{\bm{f}}_{\hat{\theta}^{(-j)}}^{(-j)})-\bm{u}^{(-j)}\Big\}\Big\{\E_{\mathcal{D}_{L}^{(j)}}(\hat{\bm{f}}_{\hat{\theta}^{(-j)}}^{(-j)})-\bm{u}^{(-j)}\Big\}^\top\\
    &\quad+\Big\{\E_{\mathcal{D}_{L}^{(j)}}(\bm{f}_{\hat{\theta}^{(-j)}})-\bm{v}^{(-j)}\Big\}\Big\{\E_{\mathcal{D}_{L}^{(j)}}(\bm{f}_{\hat{\theta}^{(-j)}})-\bm{v}^{(-j)}\Big\}^\top.
\end{align*}
By Assumptions \ref{asmp:score is L2 bounded}, \ref{asmp:cross T moment}, \ref{asmp:cross f moment} and Lemma \ref{lem:conditional o_p(1)}, we have 
\begin{gather*}
    \E_{\mathcal{D}_{L}^{(j)}}\{\bm{U}^{(-j)}-\bm{V}^{(-j)}\}=o_p(1),\\
    \E_{\mathcal{D}_L^{(j)}}\{\hat{\bm{f}}_{\hat{\theta}^{(-j)}}^{(-j)}-\bm{u}^{(-j)}\}=o_p(1), \quad\E_{\mathcal{D}_L^{(j)}}\{\bm{f}_{\hat{\theta}^{(-j)}}-\bm{v}^{(-j)}\}=o_p(1).
\end{gather*}
Since the elements of the matrix inverse are continuous functions of the original matrix elements, which implies that
$\cov^{-1}_{\mathcal{D}_L^{(j)}}(\hat{\bm{f}}_{\hat{\theta}^{(-j)}}^{(-j)})-\cov^{-1}_{\mathcal{D}_L^{(j)}}(\bm{f}_{\hat{\theta}^{(-j)}})=o_p(1)$. Following the similar procedure, we can prove that $\cov_{\mathcal{D}_l^{(j)}}(\bm{s}_{\hat{\theta}^{(-j)}},\hat{\bm{f}}_{\hat{\theta}^{(-j)}}^{(-j)})-\cov_{\mathcal{D}_l^{(j)}}(\bm{s}_{\hat{\theta}^{(-j)}},\bm{f}_{\hat{\theta}^{(-j)}})=o_p(1)$. Thus $\widehat{\bm{T}}^{(j)}_{ss}(\hat{\theta}^{(-j)})-\widehat{\bm{T}}^{(j)}(\hat{\theta}^{(-j)})=o_p(1)$. 

Moreover, by Assumptions \ref{asmp:consist init}, \ref{asmp:score is L2 bounded} and \ref{asmp:smooth score}, 
\begin{small}
\begin{gather*}
    \E\{\Vert\bm{f}_{\hat{\theta}^{(-j)}}-\bm{f}_{\theta^*}\Vert^2\}\le \E\{\dot{f}^2(\mathbf{X})\Vert\hat{\theta}^{(-j)}-\theta^*\Vert^2\}=\E\{\dot{f}^2(\mathbf{X})\}\E\{\Vert\hat{\theta}^{(-j)}-\theta^*\Vert^2\}=o(n^{-1}).
\end{gather*}
\end{small}
This, combined with Assumption \ref{asmp:cross f moment} yields that $\E\{\Vert\hat{\bm{f}}_{\hat{\theta}^{(-j)}}^{(-j)}-\bm{f}_{\theta^*}\Vert^2\}=o(1)$. By triangular inequality and Assumption \ref{asmp:score is L2 bounded}, we have $\E\{\Vert\hat{\bm{f}}_{\hat{\theta}^{(-j)}}^{(-j)}\Vert^2\}=O(1)$. Moreover, Markov inequality implies that $\E\{\Vert\hat{\bm{f}}_{\hat{\theta}^{(-j)}}^{(-j)}\Vert^2\vert \mathcal{F}_n\}=O_p(1)$. Then, by Lemma \ref{lem:conditional O_p(1)}, we have $\E_{\mathcal{D}_L^{(j)}}(\hat{\bm{f}}_{\hat{\theta}^{(-j)}}^{(-j)})=\E(\hat{\bm{f}}_{\hat{\theta}^{(-j)}}^{(-j)})+ O_p(n^{-1/2})$. Therefore, for every $j\in\{1,\dots,K\}$, 
\begin{equation}\label{equ:cross second term}
    M_2^{(j)}=o_p(n^{-1/2}).
\end{equation}

 On the other hand, let $\Delta_{\hat{\bm{f}}^{(-j)},\hat{\theta}^{(-j)}}=\E\{\hat{\bm{f}}_{\hat{\theta}^{(-j)}}^{(-j)}-\bm{f}_{\hat{\theta}^{(-j)}}\vert\hat{\bm{f}}^{(-j)},\hat{\theta}^{(-j)}\}$, we have 
    \begin{align*}
        &\E\{\Vert \E_{\mathcal{D}_{L}^{(j)}}(\hat{\bm{f}}_{\hat{\theta}^{(-j)}}^{(-j)}-\bm{f}_{\hat{\theta}^{(-j)}}-\Delta_{\hat{\bm{f}}^{(-j)},\hat{\theta}^{(-j)}}) \Vert^2\}\\
        &=\E\{\E\{\Vert \E_{\mathcal{D}_{L}^{(j)}}(\hat{\bm{f}}_{\hat{\theta}^{(-j)}}^{(-j)}-\bm{f}_{\hat{\theta}^{(-j)}}-\Delta_{\hat{\bm{f}}^{(-j)},\hat{\theta}^{(-j)}}) \Vert^2\big\vert\hat{\bm{f}}^{(-j)},\hat{\theta}^{(-j)}\}\}\\
        &=\frac{K}{n}\E\{\E\{\Vert\hat{\bm{f}}_{\hat{\theta}^{(-j)}}^{(-j)}-\bm{f}_{\hat{\theta}^{(-j)}}-\Delta_{\hat{\bm{f}}^{(-j)},\hat{\theta}^{(-j)}}\Vert^2\big\vert\hat{\bm{f}}^{(-j)},\hat{\theta}^{(-j)}\}\}\\
        &\le\frac{K}{n}\E\{\E\{\Vert\hat{\bm{f}}_{\hat{\theta}^{(-j)}}^{(-j)}-\bm{f}_{\hat{\theta}^{(-j)}}\Vert^2\big\vert\hat{\bm{f}}^{(-j)},\hat{\theta}^{(-j)}\}\}\\
        &=\frac{K}{n}\E\{\Vert\hat{\bm{f}}_{\hat{\theta}^{(-j)}}^{(-j)}-\bm{f}_{\hat{\theta}^{(-j)}}\Vert^2\}=o(n^{-1}),
    \end{align*}
    where the first and fourth steps follow from law of total expectation, the second step follows by the conditional independence. Hence $\E_{\mathcal{D}_{L}^{(j)}}\{\hat{\bm{f}}_{\hat{\theta}^{(-j)}}^{(-j)}-\bm{f}_{\hat{\theta}^{(-j)}}-\Delta_{\hat{\bm{f}}^{(-j)},\hat{\theta}^{(-j)}}\}=o_p(n^{-1/2})$. This combined with (\ref{equ:T diff=o_p(1)}) implies that, for every $j\in\{1,\dots,K\}$, 
    \begin{equation}\label{equ:cross first term}
        M_1^{(j)}=o_p(n^{-1/2}).
    \end{equation}

    Since $K$ is fixed, combining (\ref{equ:cross second term}) with (\ref{equ:cross first term}) leads to $\hat{S}_{ss}-\hat{S}_{cs}=o_p(n^{-1/2})$. Therefore, 
    \begin{gather*}
        \sqrt{n}(\hat{\theta}_{cs,1}-\hat{\theta}_{sspdc,1})=\sqrt{n}\widehat{\bm{H}}^{-1}(\hat{S}_{ss}-\hat{S}_{cs})=o_p(1).
    \end{gather*}
    Finally, by Slutsky's theorem and (\ref{equ:cs asmp dist}), we conclude that  (\ref{equ:cspdc aymp normal}) holds.
\end{proof}

\begin{proof}[Proof of Theorem \ref{thm:random f}]
         Under Assumption \ref{asmp:stochastic asmp}(\ref{asmp:smooth stochastic f}), Lemma 19.24 of \cite{vaart_1998} implies
        \begin{equation*}
            \mathbb{G}_n(\tilde{\bm{f}}_{\hat{\theta}_0}-\tilde{\bm{f}}_{\theta^*})=o_p(1).
        \end{equation*}
        Under Assumption \ref{asmp:stochastic asmp}(\ref{asmp:f decompose}),
        \begin{gather*}
            \E_n(\bm{f}_{\zeta;\hat{\theta}_0}-\bm{f}_{\zeta;\theta^*})=\E_n(\tilde{\bm{f}}_{\hat{\theta}_0}-\tilde{\bm{f}}_{\theta^*})=\E(\tilde{\bm{f}}_{\hat{\theta}_0}-\tilde{\bm{f}}_{\theta^*})+n^{-1/2}\mathbb{G}_n(\tilde{\bm{f}}_{\hat{\theta}_0}-\tilde{\bm{f}}_{\theta^*})\\
            =\E(\tilde{\bm{f}}_{\hat{\theta}_0}-\tilde{\bm{f}}_{\theta^*})+o_p(n^{-1/2}).
        \end{gather*}
	    Similar to equation \eqref{equ:one-step decompose}, we can decompose $\sqrt{n}\widehat{\bm{H}}(\hat{\theta}_{pdc,1}-\theta^*)$ as 
	\begin{align*}
		\sqrt{n}\widehat{\bm{H}}(\hat{\theta}_{pdc,1}-\theta^*)&=\RN{1}-\RN{2}\\
		&\quad -\sqrt{n}\gamma\bm{T}^*(\theta^*)\big\{\E_n(\tilde{\bm{f}}_{\hat{\theta}_0}-\tilde{\bm{f}}_{\theta^*})-\E_N(\tilde{\bm{f}}_{\hat{\theta}_0}-\tilde{\bm{f}}_{\theta^*})\big\}\\
		&\quad-\sqrt{n}\gamma\{\widehat{\bm{T}}(\hat{\theta}_0)-\bm{T}^*(\theta^*)\}\big\{\E_n(\tilde{\bm{f}}_{\hat{\theta}_0})-\E_{N}(\tilde{\bm{f}}_{\hat{\theta}_0})\big\}\\
        &:=\RN{1}-\RN{2}-\tilde{\RN{3}}-\tilde{\RN{4}}.
	\end{align*}
	 The proof follows the same procedure as that of the Theorem \ref{thm:one-step pdc}. 
\end{proof}

\begin{proof}[Proof of Corollary \ref{coro:estimated optimal gamma}]
	We denote the one-step PDC estimator when $\gamma=\gamma_n$ as $\hat{\theta}_{pdc,1}(\gamma_n)$.  We can still use the decomposition in equation \eqref{equ:one-step decompose}.  We brief denote the decomposition as $\sqrt{n}\hat{\bm{H}}(\hat{\theta}_{pdc,1}(\gamma_n)-\theta^*)=\RN{1}-\RN{2}_n-\RN{3}_n-\RN{4}_n$. 
	$\RN{1}$ is the same as the one in the fixed $\gamma$ case.  Since $\gamma_n$ is bounded, we still have $\RN{3}_n=o_p(1)$ and $\RN{4}_n=o_p(1)$. For $\RN{2}_n$, we have
	\begin{align*}
		\RN{2}_n-\RN{2}=(\gamma_n-\gamma_{opt})\sqrt{n}\bm{T}^*(\theta^*)\{\E_n(\bm{f}_{\theta^*})-\E_N(\bm{f}_{\theta^*})\}=(\gamma_n-\gamma_{opt})O_p(1)=o_p(1).
	\end{align*}
	Thus we have completed the proof.
\end{proof}

\noindent\textbf{Additional results for real data analysis}

We present the estimated $\bm{T}$ of the PDC and the tuning parameter $\lambda$ of the PPI++ in the analysis of the homeless dataset as follows.
\begin{small}
\begin{gather*}
    \widehat{\bm{T}}(\hat{\theta}_0)=
    \begin{bmatrix}
        0.547    &    0.001    &     0.004 &  -0.007  &     -0.005 &    -0.007  &   -0.002       &          0.002\\
         -0.942   &     0.111     &    0.036  &  0.040   &     0.021   &   0.016   &  -0.043        &         0.038\\
         21.196   &    -0.068     &    0.192  & -0.061    &   -0.186   &  -0.324    &  0.025      &           0.052\\
              7.089   &    -0.002     &    0.035 &  -0.237   &    -0.060   &  -0.089   &  -0.007   &              0.016\\
          26.999   &     0.004     &    0.066  & -0.294   &    -0.353   &  -0.332   &  -0.064       &          0.086\\
            12.178  &      0.117    &     0.139  & -0.344    &   -0.047   &  -0.085  &   -0.103       &          0.005\\
            29.379   &     0.077     &    0.343  & -0.107    &   -0.314   &  -0.433  &   -0.081         &        0.059\\
 22.267    &    0.081   &      0.119  & -0.380   &    -0.082   &  -0.108  &   -0.158    &            -0.084
    \end{bmatrix},\\
    \hat{\lambda}(\hat{\theta}_0)=-0.060.
\end{gather*}
\end{small}
\newpage

 \end{document}